\theoremstyle{plain}
\newtheorem{theorem}{Theorem}[section]
\newtheorem{lemma}{Lemma}[section]
\newtheorem{assumption}{Assumption}[section]
\pgfplotsset{compat=newest} 
\pgfplotsset{plot coordinates/math parser=false}
\DeclareSymbolFont{cmletters}{OML}{cmm}{m}{it}
\DeclareMathSymbol{\epsilon}{\mathord}{cmletters}{"0F}
\newtheorem{remark}{Remark}
\patchcmd{\tkz@DrawLine}{\begingroup}{\begingroup\makeatletter}{}{}
\DeclareMathOperator{\diag}{\mathrm{diag}}
\newcommand\makebig[2]{%
  \@xp\newcommand\@xp*\csname#1\endcsname{\bBigg@{#2}}%
  \@xp\newcommand\@xp*\csname#1l\endcsname{\@xp\mathopen\csname#1\endcsname}%
  \@xp\newcommand\@xp*\csname#1r\endcsname{\@xp\mathclose\csname#1\endcsname}%
}
\providecommand*{\ped}[1]{%
\ensuremath{_\textnormal{#1}}}
\providecommand*{\eu}%
{\ensuremath{\mathrm{e}}}
\providecommand*{\GammaF}%
{\ensuremath{\mathrm{\Gamma}}}
\providecommand*{\BetaF}%
{\ensuremath{\mathrm{\Beta}}}
\DeclareMathSymbol{\Gamma}{\mathalpha}{letters}{"00}
\DeclareMathSymbol{\Delta}{\mathalpha}{letters}{"01}
\DeclareMathSymbol{\Theta}{\mathalpha}{letters}{"02}
\DeclareMathSymbol{\Lambda}{\mathalpha}{letters}{"03}
\DeclareMathSymbol{\Xi}{\mathalpha}{letters}{"04}
\DeclareMathSymbol{\Pi}{\mathalpha}{letters}{"05}
\DeclareMathSymbol{\Sigma}{\mathalpha}{letters}{"06}
\DeclareMathSymbol{\Upsilon}{\mathalpha}{letters}{"07}
\DeclareMathSymbol{\Phi}{\mathalpha}{letters}{"08}
\DeclareMathSymbol{\Psi}{\mathalpha}{letters}{"09}
\DeclareMathSymbol{\Omega}{\mathalpha}{letters}{"0A}
\DeclareMathAlphabet{\altmathcal}{OMS}{cmsy}{m}{n}
\begin{document}
\title{Boundary adaptive observer design for semilinear hyperbolic\\ rolling contact ODE-PDE systems with uncertain friction}
\date{}
\author[a,b,c]{Luigi Romano\thanks{Corresponding author. Email: luigi.romano@liu.se.}}
\author[b]{Ole Morten Aamo}
\author[c]{Miroslav Krstić}
\author[a]{Jan Aslund}
\author[a]{Erik Frisk}
\affil[a]{\footnotesize{Department of Electrical Engineering, Linköping University, SE-581 83 Linköping, Sweden}}
\affil[b]{\footnotesize{Department of Engineering Cybernetics, Norwegian University of Science and Technology, O. S. Bragstads plass 2, NO-7034, Trondheim, Norway}}
\affil[c]{\footnotesize{Department of Mechanical and Aerospace Engineering, University of California San Diego, La Jolla, CA, 92093, USA}}
\maketitle
\begin{strip}
    \centering
    \begin{minipage}{.8\textwidth}
        \begin{abstract}
           This paper presents an adaptive observer design for semilinear hyperbolic rolling contact ODE-PDE systems with uncertain friction characteristics parameterized by a matrix of unknown coefficients appearing in the nonlinear (and possibly non-smooth) PDE source terms. Under appropriate assumptions of forward completeness and boundary sensing, an adaptive observer is synthesized to simultaneously estimate the lumped and distributed states, as well as the uncertain friction parameters, using only boundary measurements. The observer combines a finite-dimensional parameter estimator with an infinite-dimensional description of the state error dynamics, and achieves exponential convergence under persistent excitation. The effectiveness of the proposed design is demonstrated in simulation by considering a relevant example borrowed from road vehicle dynamics.
        \end{abstract}
        \hspace{0.5cm}

    \end{minipage}
\end{strip}


\section{Introduction}

Rolling contact systems with spatially-distributed friction effects arise in numerous mechanical applications, such as vehicle dynamics \parencite{TsiotrasConf,Tsiotras1,Tsiotras2,Deur1,Deur2,Mio}, rotating machinery \parencite{Frendo1,Frendo2}, and robotic locomotion \parencite{2D}. A common modeling approach involves interconnecting finite-dimensional ordinary differential equations (ODEs), which represent the rigid body dynamics, with hyperbolic partial differential equations (PDEs) that describe the behavior of the contact interface, in the presence of dry or lubricated friction \parencite{DistrLuGre,FrBD}. In many engineering applications, particularly in the automotive domain, accurately estimating system states and physical parameters is essential not only for real-time control but also for monitoring, diagnostics, and higher-level functions. For example, retrieving the tire-road friction coefficient is critical not only for the design of Advanced Driver Assistance Systems (ADAS), such as anti-lock braking systems (ABS) and electronic stability control (ESC) \parencite{Savaresi}, but also for road condition characterization, which may enhance safety, efficiency, and coordination of both individual vehicles and entire fleets \parencite{Shao,Shao2}.

Motivated by these considerations, this paper addresses the problem of joint state and parameter estimation for a class of semilinear, homodirectional ODE-PDE systems. These systems model rolling contact dynamics with distributed friction, where the PDE subsystem features nonlinear source terms governed by unknown parameters, and couples to the ODE dynamics through integral functionals of the distributed state and, possibly, boundary terms. The proposed observer reconstructs both the lumped and distributed states, as well as the friction-related parameters, using only boundary measurements. Importantly, in contrast to classical adaptive control frameworks where parameter estimation serves the primary goal of stabilizing or regulating the system, the focus here is on accurate estimation per se. This distinction reflects application scenarios in which reliable state and parameter reconstruction is valuable in its own right, independent of any direct control objective.

Adaptive observer design for PDE systems initially emerged within the broader framework of adaptive output-feedback control, particularly for parabolic PDEs \parencite{Smysh}. This foundational approach was later extended to first-order hyperbolic integro-differential equations with unknown functional coefficients in \textcite{Krstic1}, and to more general heterodirectional hyperbolic systems featuring both boundary and in-domain parametric uncertainties in \textcite{OleBook}. In \textcite{Ole0}, the disturbance rejection problem for coupled ODE-$2\times2$ hyperbolic systems was also addressed, albeit without parameter uncertainties. However, in many of these early developments, particularly those embedded in adaptive output-feedback controllers, convergence of the estimated parameters to their true values is not always guaranteed or even required. In contrast, a growing line of research has shifted focus toward adaptive observers aimed specifically at accurate estimation of both the system states and unknown parameters, independently of any control objective. Within this direction, which aligns with the scope of the present paper, adaptive observers have been developed for $2\times2$ hyperbolic systems in \textcite{Ghousein1}, wave equations with in-domain uncertainties in \textcite{Ben}, and general $n+1$ and $n+m$ hyperbolic PDEs in \textcite{Aamo,Aamo3}.

Concerning coupled hyperbolic ODE-PDE systems with unknown parameters, notable works include those of \textcite{Aamo2}, which consider a $2\times 2$ hyperbolic PDE coupled with an uncertain linear time-invariant (LTI) ODE, \textcite{Ghousein1}, which analyzes a class of linear homodirectional hyperbolic PDEs coupled with uncertain linear time-varying (LTV) ODEs, and \textcite{Traffic,Traffic2}, where the sources of uncertainty appear in the PDE subsystem. Most of these studies, however, are limited to linear, heterodirectional systems. With notable exceptions for the non-adaptive case (see, e.g., \textcite{Hasan}), the extension to semilinear equations, possibly homodirectional and with unknown parameters, also remains limited. The challenge lies in the nonlinearity of the PDE, the finite-dimensional nature of the parameter estimation problem, and the need to preserve well-posedness and convergence guarantees. The problem becomes even more delicate when considering ODE-PDE interconnections with nonlinear and non-differentiable coefficients, as is the case in the present paper, where friction-related terms introduce sources of non-smoothness.

Building upon recent advances in adaptive control and PDE estimation \parencite{Ole1,Ole2}, a novel observer architecture is proposed that combines a finite-dimensional adaptive law for parameter estimation based on filtered regression equations with an infinite-dimensional observer driven by boundary measurements. The approach establishes exponential convergence under mild assumptions on the system dynamics and excitation properties of the inputs. The mathematical analysis is based on standard Lyapunov arguments, similar to those utilized in \textcite{Coron}, in conjunction with uniform convergence results obtained for a regularized version of the non-smooth problem considered in the paper.
The proposed framework is validated considering a vehicle model with Dahl-type distributed tire friction, showing fast and robust convergence of both state and parameter estimates.


\subsection*{Notation}
In this paper, $\mathbb{R}$ denotes the set of real numbers; $\mathbb{R}_{>0}$ and $\mathbb{R}_{\geq 0}$ indicate the set of positive real numbers excluding and including zero, respectively; $\mathbb{N}_0$ indicates the set of integers including zero. 
The set of $n\times m$ matrices with values in $\mathbb{F}$ ($\mathbb{F} = \mathbb{R}$, $\mathbb{R}_{>0}$, or $\mathbb{R}_{\geq0}$) is denoted by $\mathbf{M}_{n\times m}(\mathbb{F})$ (abbreviated as $\mathbf{M}_{n}(\mathbb{F})$ whenever $m=n$). $\mathbf{GL}_n(\mathbb{F})$ and $\mathbf{Sym}_n(\mathbb{F})$ represents the groups of invertible and symmetric matrices, respectively, with values in $\mathbb{F}$; the identity matrix on $\mathbb{R}^n$ is indicated with $I_n$. A positive-definite matrix is noted as $\mathbf{M}_n(\mathbb{R}) \ni Q \succ 0$. The identity operator on a Banach space $\altmathcal{Z}$ is denoted by $I_{\altmathcal{Z}}$.
The standard Euclidean norm on $\mathbb{R}^n$ is indicated with $\norm{\cdot}_2$; operator norms are simply denoted by $\norm{\cdot}$.
$L^2((0,1);\mathbb{R}^n)$ denotes the Hilbert space of square-integrable functions on $(0,1)$ with values in $\mathbb{R}^n$, endowed with inner product $\langle \zeta_1, \zeta_2 \rangle_{L^2((0,1);\mathbb{R}^n)} = \int_0^1 \zeta_1^{\mathrm{T}}(\xi)\zeta_2(\xi) \dif \xi$ and induced norm $\norm{\zeta(\cdot)}_{L^2((0,1);\mathbb{R}^n)}$. The Hilbert space $H^1((0,1);\mathbb{R}^n)$ consists of functions $\zeta\in L^2((0,1);\mathbb{R}^n)$ whose weak derivative also belongs to $L^2((0,1);\mathbb{R}^n)$; it is naturally equipped with norm $\norm{\zeta(\cdot)}_{H^1((0,1);\mathbb{R}^n)}^2 \triangleq \norm{\zeta(\cdot)}_{L^2((0,1);\mathbb{R}^n)}^2 + \norm{\pd{\zeta(\cdot)}{\xi}}_{L^2((0,1);\mathbb{R}^n)}^2$.
Given a domain $\Omega$ with closure $\overline{\Omega}$, $L^p(\Omega;\altmathcal{Z})$ and $C^k(\overline{\Omega};\altmathcal{Z})$ ($p, k \in \{1, 2, \dots, \infty\}$) denote respectively the spaces of $L^p$-integrable functions and $k$-times continuously differentiable functions on $\overline{\Omega}$ with values in $\altmathcal{Z}$ (for $T = \infty$, the interval $[0,T]$ is identified with $\mathbb{R}_{\geq 0}$). Given two Hilbert spaces $\altmathcal{V}$ and $\altmathcal{W}$, $\mathscr{L}(\altmathcal{V};\altmathcal{W})$ denotes the spaces of linear operators from $\altmathcal{V}$ to $\altmathcal{W}$ (abbreviated $\mathscr{L}(\altmathcal{V})$ if $\altmathcal{V} = \altmathcal{W}$). Finally, following \textcite{Kato}, the group of operators on a Banach space $\altmathcal{Z}$ that are infinitesimal generators of a $C_0$-semigroup satisfying $\norm{T(t)} \leq \eu^{\omega t}$ is conventionally denoted by $\mathscr{G}(\altmathcal{Z}; 1, \omega)$.

\section{Problem statement}\label{sect:Problem}
This section is dedicated to introducing the considered ODE-PDE system, along with the main assumptions formulated about its structure and the available measurements.
\subsection{Structure of the considered ODE-PDE system}

In this paper, the following semilinear homodirectional hyperbolic ODE-PDE interconnection is considered:
\begin{subequations}\label{eq:originalSystems}
\begin{align}
\begin{split}
& \dot{X}(t) = A_1X(t) + G_1(\mathscr{K}_1z)(t) + G_1\Theta\Sigma\Bigl(v\bigl(X(t),U(t)\bigr)\Bigr)(\mathscr{K}_2z)(t)  \\
& \qquad \quad + G_1h_1\Bigl(v\bigl(X(t),U(t)\bigr)\Bigr), \quad   t \in (0,T),
\end{split} \label{eq:originalSystemsODE}\\
\begin{split}
& \dpd{z(\xi,t)}{t} + \Lambda \dpd{z(\xi,t)}{\xi} =\Theta \Sigma\Bigl(v\bigl(X(t),U(t)\bigr)\Bigr) z(\xi,t) \\
& \qquad \qquad \qquad \qquad \qquad \qquad +h_2\Bigl(v\bigl(X(t),U(t)\bigr)\Bigr), \\
& \qquad \qquad \qquad \qquad \qquad \qquad  (\xi,t) \in (0,1) \times (0,T),
\end{split} \label{eq:originalSystemsPDE} \\
& z(0,t) = 0, \quad t \in (0,T),\label{eq:originalSystemsBC}
\end{align}
\end{subequations}
where $X(t) \in \mathbb{R}^{n_X}$ indicates the lumped state vector, $z(\xi,t) \in \mathbb{R}^{n_z}$ denotes the distributed state vector, $U(t) \in \mathbb{R}^{n_U}$ represents the input to the PDE subsystem, $\mathbf{Sym}_{n_z}(\mathbb{R}) \ni \Theta = \diag\{\theta_1, \dots, \theta_{n_z}\} \succ 0$ is a diagonal matrix of unknown parameters, and the rigid relative velocity $v \in C^1(\mathbb{R}^{n_X+n_U};\mathbb{R}^{n_z})$ reads
\begin{align}\label{eq:relVel}
v(X,U) = A_2X + G_2U.
\end{align}
In~\eqref{eq:originalSystems} and~\eqref{eq:relVel}, the matrix $\mathbf{GL}_{n_z}(\mathbb{R})\cap \mathbf{Sym}_{n_z}(\mathbb{R}) \ni \Lambda = \diag\{\lambda_1, \dots, \lambda_{n_z}\} \succ 0$ collects the transport velocities, $\Sigma \in C^0(\mathbb{R}^{n_z};\mathbf{M}_{n_z}(\mathbb{R}))$ represents the nonlinear source matrix, the matrices $A_1 \in \mathbf{M}_{n_X}(\mathbb{R})$, $A_2 \in \mathbf{M}_{n_z\times n_X}(\mathbb{R})$, $G_1 \in  \mathbf{M}_{n_X\times n_z}(\mathbb{R})$, and $G_2\in  \mathbf{M}_{n_z\times n_U}(\mathbb{R})$ have constant coefficients, $h_1 \in C^0(\mathbb{R}^{n_z};\mathbb{R}^{n_z})$, and $h_2 \in C^0(\mathbb{R}^{n_z};\mathbb{R}^{n_z})$ is invertible.
Finally, the operators $(\mathscr{K}_1\zeta)$ and $(\mathscr{K}_2\zeta)$ satisfy $\mathscr{K}_1\in\mathscr{L}(H^1((0,1);\mathbb{R}^{n_z});\mathbb{R}^{n_z})$ and $\mathscr{K}_2 \in \mathscr{L}(L^2((0,1);\mathbb{R}^{n_z});\mathbb{R}^{n_z})$, and are given by
\begin{subequations}\label{eq;operatorsKs}
\begin{align}
(\mathscr{K}_1\zeta) &\triangleq   \int_0^1 K_1(\xi) \zeta(\xi) \dif \xi + K_2\zeta(1), \label{eq:operatorK_1}\\
(\mathscr{K}_2\zeta) &\triangleq   \int_0^1 K_3(\xi) \zeta(\xi) \dif \xi, \label{eq:operatorK_2}
\end{align}
\end{subequations}
with $K_1,K_3 \in C^0([0,1];\mathbf{M}_{n_z}(\mathbb{R}))$, and $K_2 \in \mathbf{M}_{n_z}(\mathbb{R})$. 


ODE-PDE interconnections of the type~\eqref{eq:originalSystems} typically describe rolling contact systems where the lumped states capture the rigid body dynamics, and the distributed ones describe the rolling contact motion in the presence of dry or lubricated friction. For instance, distributed friction models that can be put in the form~\eqref{eq:originalSystemsPDE} are the Dahl model, the LuGre model \parencite{DistrLuGre}, and certain variants of the FrBD model recently developed in \textcite{FrBD}.
From a mathematical perspective, the ODE-PDE interconnection~\eqref{eq:originalSystems} is (locally) well-posed. In particular, the Hilbert spaces $\altmathcal{X}\triangleq \mathbb{R}^{n_X}\times L^2((0,1);\mathbb{R}^{n_z})$ and $\altmathcal{Y}\triangleq \mathbb{R}^{n_X}\times H^1((0,1);\mathbb{R}^{n_z})$ are considered, equipped respectively with norms $\norm{(Z, \zeta(\cdot))}_{\altmathcal{X}}^2 \triangleq \norm{Z}_{2}^2 + \norm{\zeta(\cdot)}_{L^2((0,1);\mathbb{R}^{n_z})}^2$ and $\norm{(Z, \zeta(\cdot))}_{\altmathcal{Y}}^2 \triangleq \norm{Z}_{2}^2 + \norm{\zeta(\cdot)}_{H^1((0,1);\mathbb{R}^{n_z})}^2$, along with the domain $\mathscr{D}(\mathscr{A}) \triangleq \{ (Z,\zeta)\in \altmathcal{Y} \mathrel{|} \zeta(0) = 0\}$
With the above definitions, local well-posedness is enounced by Theorem~\ref{thm:mild} below.

\begin{theorem}[Local existence and uniqueness of solutions]\label{thm:mild}
Suppose that $\Sigma \in C^0(\mathbb{R}^{n_z};\mathbf{M}_{n_z}(\mathbb{R}))$ and $h_1, h_2\in C^0(\mathbb{R}^{n_z};\mathbb{R}^{n_z})$ are locally Lipschitz continuous, and $U \in C^0([0,T];\mathbb{R}^{n_U})$. Then, for all initial conditions (ICs) $(X_0,z_0) \triangleq (X(0),z(\cdot,0)) \in \altmathcal{X}$, there exists $ t\ped{max} \leq \infty$ such that the ODE-PDE system~\eqref{eq:originalSystems} admits a unique \emph{mild solution} $(X,z) \in C^0([0,t\ped{max});\altmathcal{X})$. Additionally, if $\Sigma \in C^1(\mathbb{R}^{n_z};\mathbf{M}_{n_z}(\mathbb{R}))$, $h_1, h_2\in C^1(\mathbb{R}^{n_z};\mathbb{R}^{n_z})$, and $U \in C^1([0,T];\mathbb{R}^{n_U})$, for all ICs $(X_0,z_0) \in \mathscr{D}(\mathscr{A})$, the solution is \emph{classical}, i.e., $(X,z) \in C^1([0,t\ped{max});\altmathcal{X}) \cap C^0([0,t\ped{max}); \mathscr{D}(\mathscr{A}))$.
\begin{proof}
The proof is similar to that of Theorems 3.1 and 3.2 in \textcite{SemilinearMio}, and hence omitted for brevity.
\end{proof}
\end{theorem}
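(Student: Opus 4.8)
The plan is to recast the ODE--PDE interconnection~\eqref{eq:originalSystems} as an abstract semilinear Cauchy problem $\dot w = \mathscr{A}_0 w + \mathscr{F}(t,w)$ on the Hilbert space $\altmathcal{X}$, with state $w \triangleq (X, z(\cdot))$, and then to invoke the standard local well-posedness theory for such equations (the argument parallels that of \cite{SemilinearMio}, which is why a shortened presentation suffices). First I would introduce the transport-plus-lumped operator $\mathscr{A}\colon \mathscr{D}(\mathscr{A}) \subset \altmathcal{X} \to \altmathcal{X}$, $\mathscr{A}(X,z) \triangleq \bigl(A_1 X,\, -\Lambda \pd{z}{\xi}\bigr)$, on the stated domain $\mathscr{D}(\mathscr{A}) = \{(Z,\zeta)\in\altmathcal{Y} \mathrel{|} \zeta(0)=0\}$, and show that $\mathscr{A}\in\mathscr{G}(\altmathcal{X};1,\omega)$ with $\omega = \norm{A_1}$. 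Density of $\mathscr{D}(\mathscr{A})$ in $\altmathcal{X}$ is immediate; dissipativity of $\mathscr{A} - \omega I_{\altmathcal{X}}$ follows from the integration-by-parts identity $\langle \mathscr{A}(X,z),(X,z)\rangle_{\altmathcal{X}} = X^{\mathrm{T}}A_1 X - \tfrac12\sum_{i=1}^{n_z}\lambda_i z_i^2(1) \le \norm{A_1}\norm{(X,z)}_{\altmathcal{X}}^2$, where the boundary contribution at $\xi = 0$ vanishes precisely because $z(0)=0$ on $\mathscr{D}(\mathscr{A})$ and $\Lambda$ is diagonal positive; the range condition $\operatorname{Range}(\lambda I_{\altmathcal{X}} - \mathscr{A}) = \altmathcal{X}$ for $\lambda > \omega$ is obtained by inverting $(\lambda I - A_1)$ on the finite-dimensional block and solving $\lambda z + \Lambda z' = f$, $z(0)=0$, whose unique solution $z(\xi) = \int_0^\xi \eu^{-\lambda\Lambda^{-1}(\xi-s)}\Lambda^{-1}f(s)\,\dif s$ lies in $H^1((0,1);\mathbb{R}^{n_z})$. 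The Lumer--Phillips theorem then gives the claim.

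Next I would absorb the linear coupling $w\mapsto\bigl(G_1(\mathscr{K}_1 z),0\bigr)$ into the generator, setting $\mathscr{A}_0 \triangleq \mathscr{A} + \mathscr{P}$ with $\mathscr{P}(X,z)\triangleq\bigl(G_1(\mathscr{K}_1 z),0\bigr)$ and $\mathscr{D}(\mathscr{A}_0) = \mathscr{D}(\mathscr{A})$. The contribution of $\int_0^1 K_1(\xi)z(\xi)\,\dif\xi$ is bounded on $\altmathcal{X}$, hence a harmless bounded perturbation; the contribution of the boundary trace $K_2 z(1)$ is the genuine obstacle, since $\zeta\mapsto\zeta(1)$ is unbounded on $L^2((0,1);\mathbb{R}^{n_z})$, so $\mathscr{P}$ is not $\altmathcal{X}$-bounded and cannot be lumped into the nonlinearity. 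I would treat it as a Miyadera--Voigt perturbation (cf.\ the perturbation theory in \cite{Kato}): writing $\{T(t)\}_{t\ge0}$ for the semigroup generated by $\mathscr{A}$, whose $z$-component is the finite-speed, eventually-null right shift $(T(t)w)_z(\xi) = (z_0)_i(\xi-\lambda_i t)$ (extended by zero), a Cauchy--Schwarz estimate on the outflow trace yields $\int_0^\tau \norm{\mathscr{P}\,T(t)w}_{\altmathcal{X}}\,\dif t \le C(\tau)\norm{w}_{\altmathcal{X}}$ with $C(\tau)\downarrow 0$ as $\tau\downarrow 0$. Hence $\mathscr{A}_0$ generates a $C_0$-semigroup $\{T_0(t)\}_{t\ge0}$ on $\altmathcal{X}$, with $\norm{T_0(t)}\le M\eu^{\omega_0 t}$ for suitable $M\ge1$, $\omega_0\in\mathbb{R}$; throughout, a \emph{mild solution} of~\eqref{eq:originalSystems} is understood as a fixed point of the variation-of-constants map built from $\{T_0(t)\}$ and $\mathscr{F}$.

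It remains to handle the residual nonlinearity $\mathscr{F}\colon [0,T]\times\altmathcal{X}\to\altmathcal{X}$,
\[
\mathscr{F}\bigl(t,(X,z)\bigr) \triangleq \Bigl( G_1\Theta\Sigma\bigl(v(X,U(t))\bigr)(\mathscr{K}_2 z) + G_1 h_1\bigl(v(X,U(t))\bigr),\ \Theta\Sigma\bigl(v(X,U(t))\bigr)z + h_2\bigl(v(X,U(t))\bigr)\Bigr),
\]
and to close the argument. Since $v(X,U) = A_2X + G_2U$ is affine, $\Sigma$, $h_1$, $h_2$ are locally Lipschitz, $\mathscr{K}_2$ is bounded on $L^2((0,1);\mathbb{R}^{n_z})$, and $U\in C^0([0,T];\mathbb{R}^{n_U})$, the map $\mathscr{F}$ is continuous on $[0,T]\times\altmathcal{X}$ and Lipschitz in $w$, uniformly on bounded subsets of $\altmathcal{X}$ and uniformly for $t$ in compact subintervals — the only mildly non-routine point being that $(X,z)\mapsto\Theta\Sigma(v(X,U(t)))\,z$ is locally Lipschitz from $\altmathcal{X}$ into $L^2((0,1);\mathbb{R}^{n_z})$, which follows by adding and subtracting and bounding the matrix factor and $\norm{z}_{L^2}$ on bounded sets. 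The standard local existence/uniqueness theorem for semilinear Cauchy problems (as in \cite{SemilinearMio}) then yields a maximal time $t\ped{max}\le\infty$ and a unique mild solution $(X,z)\in C^0([0,t\ped{max});\altmathcal{X})$, together with the usual blow-up alternative whenever $t\ped{max}<T$. Finally, under the strengthened hypotheses $\Sigma\in C^1$, $h_1,h_2\in C^1$, $U\in C^1$ and $(X_0,z_0)\in\mathscr{D}(\mathscr{A})$, the map $\mathscr{F}$ becomes continuously Fréchet differentiable in $w$ and $C^1$ in $t$, while on $\mathscr{D}(\mathscr{A})$ the trace $z(1)$ is well defined so that every term of~\eqref{eq:originalSystems} acquires classical meaning; the regularity theory for semilinear equations then upgrades the mild solution to a classical one, $(X,z)\in C^1([0,t\ped{max});\altmathcal{X})\cap C^0([0,t\ped{max});\mathscr{D}(\mathscr{A}))$. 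The main obstacle, as indicated, is the boundary-trace coupling $G_1 K_2 z(1)$: it is precisely what rules out a direct application of semigroup theory on $\altmathcal{X}$ and forces the perturbation-of-generator step, whose crux in turn is the finite-propagation-speed bound securing the Miyadera--Voigt smallness condition.
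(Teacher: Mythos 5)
Your proposal is correct and follows essentially the same route the paper (implicitly) takes: the cited proof, like the rest of the paper and its Appendix, works in the semigroup framework, absorbing the transport operator together with the boundary-trace coupling $G_1\mathscr{K}_1$ into the generator (cf.\ the operator $\mathscr{A}_\rho$ in Appendix~A) and treating the remaining locally Lipschitz terms as the nonlinearity in a variation-of-constants fixed-point argument. Your Miyadera--Voigt treatment of the unbounded trace term $K_2 z(1)$ and the subsequent appeal to the standard local existence and classical-regularity theorems are the right details to supply, and they are consistent with the paper's definitions of mild and classical solutions.
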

Theorem~\ref{thm:mild} merely asserts local existence and uniqueness results for the solutions of~\eqref{eq:originalSystems}. In the following, it is explicitly assumed that the input $U\in L^\infty(\mathbb{R}_{\geq 0};\mathbb{R}^{n_U})$ is such that the existence of such solutions is global, so that $t\ped{max} = \infty$, and moreover that the solution of~\eqref{eq:originalSystems} remains uniformly bounded in time in an appropriate spatial norm. These assumptions are better formalized below.

\begin{assumption}[Forward completeness and boundedness]\label{ass:fwd}
The ODE-PDE system~\eqref{eq:originalSystems} is forward complete, that is, $t\ped{max} = \infty$ in Theorem~\ref{thm:mild}. Moreover, $(X,z) \in L^\infty(\mathbb{R}_{\geq 0};\altmathcal{X})$ for mild solutions, $(X,z) \in L^\infty(\mathbb{R}_{\geq 0};\altmathcal{Y})$ for classical solutions, and $U \in L^\infty(\mathbb{R}_{\geq 0}; \mathbb{R}^{n_U})$.
\end{assumption}

Concerning specifically the global well-posedness (forward-completeness) requirement of Assumption~\ref{ass:fwd}, additional details are discussed below. 
\begin{remark}[Structural conditions for global well-posedness]\label{rem:global-wp}
Global well-posedness of the ODE-PDE system~\eqref{eq:originalSystems} requires
specific structural properties of the nonlinear functions $\Sigma(\cdot)$,
$h_1(\cdot)$, and $h_2(\cdot)$. Following the framework established in~\textcite[Theorem~3.3]{SemilinearMio},
the following conditions are sufficient:
\begin{enumerate}
\item[(i)] The matrix $K_2 = 0$ in~\eqref{eq:operatorK_1}.
\item[(ii)] \emph{Linear growth of $h_1(\cdot),h_2(\cdot)$:} there exist constants
$L_{h_1},L_{h_2},b_1,b_2\in \mathbb{R}_{\geq 0}$ such that, for all $v\in\mathbb{R}^{n_z}$,
\begin{equation}
\norm{h_1(v)}_2 \leq L_{h_1}\norm{v}_2 + b_1, 
\quad
\norm{h_2(v)}_2 \le L_{h_2}\norm{v}_2 + b_2 .
\end{equation}
\item[(iii)] \emph{Boundedness or dissipativity of $\Sigma(\cdot)$:} either
\begin{enumerate}
\item[(H.2)] $\Sigma(\cdot)$ is uniformly bounded,
i.e.,\ $\norm{\Sigma(y)}\leq M_\Sigma$ for some $M_\Sigma \in \mathbb{R}_{\geq 0}$ and all $y\in\mathbb{R}^{n_z}$, or
\item[(H.1)] there exists a diagonal matrix-valued function
$P\in C^0([0,1];\mathbf{Sym}_{n_z}(\mathbb{R}))$, $P(\xi)\succ 0$, such that
\begin{equation}
\int_0^1 \zeta^{\mathrm{T}}(\xi)P(\xi)\Sigma(y)\zeta(\xi) \dif \xi \leq 0,
\end{equation}
for all $(y,\zeta) \in \mathbb{R}^{n_z}\times L^2((0,1);\mathbb{R}^{n_z})$.
\end{enumerate}
\end{enumerate}
The conditions above are satisfied by the distributed Dahl and FrBD friction models, but not by LuGre, due to its heuristic derivation~\textcite{SemilinearMio}.
\end{remark}

Under Assumption~\ref{ass:fwd}, the scope of this paper consists of reconstructing the parameters contained in the matrix $\Theta$, and the lumped and distributed states, using only boundary measurements from the PDE subsystem~\eqref{eq:originalSystemsPDE}, as clarified in the next Section~\ref{sect:measurements}. The reasons for incorporating all uncertainties into the matrix $\Theta$ are summarized below. 
\begin{remark}
From a physical viewpoint, the parameters in $\Sigma(v)$ are geometric or elastic quantities that can be identified off-line through static tests, whereas, under standard friction-model parametrizations \parencite{CanudasFr2,CanudasFr1}, $\Theta$ captures the friction coefficient, which depends on road conditions and cannot be measured directly during operation. Moreover, from a mathematical viewpoint, concentrating all uncertainty in $\Theta$ leads to linear measurement equations and certainty-equivalence adaptive laws. Hence, the proposed parametrization is consistent with the adaptive friction-estimation literature \parencite{Shao,Shao2,CanudasFr2,CanudasFr1} and simplifies the analysis compared with the joint estimation of geometric and elastic parameters.
\end{remark} 
In the following, convergence results for the error dynamics will be asserted in two different norms ($\norm{\cdot}_{\altmathcal{X}}$ and $\norm{\cdot}_{\altmathcal{Y}}$, respectively), depending on the smoothness of the matrix-valued function $\Sigma(\cdot)$ figuring in~\eqref{eq:originalSystems}. Indeed, for $\Sigma \in C^0(\mathbb{R}^{n_z};\mathbf{M}_{n_z}(\mathbb{R}))$ and $h_1, h_2\in C^0(\mathbb{R}^{n_z};\mathbb{R}^{n_z})$ locally Lipschitz, Theorem~\ref{thm:mild} only guarantees the existence and uniqueness of mild solutions, which motivates studying the convergence of the observer error dynamics in the norm $\norm{\cdot}_{\altmathcal{X}}$. Conversely, $\Sigma \in C^1(\mathbb{R}^{n_z};\mathbf{M}_{n_z}(\mathbb{R}))$ and $h_1, h_2\in C^1(\mathbb{R}^{n_z};\mathbb{R}^{n_z})$ permit recovering classical solutions and consequently stronger estimates for the observer error dynamics, whose convergence may be studied in the norm $\norm{\cdot}_{\altmathcal{Y}}$. Despite the intrinsic non-smooth nature of friction, the case with $\Sigma \in C^1(\mathbb{R}^{n_z};\mathbf{M}_{n_z}(\mathbb{R}))$ and $h_1, h_2\in C^1(\mathbb{R}^{n_z};\mathbb{R}^{n_z})$ is considered for completeness.


\subsection{Assumptions and considerations\\ on the available measurements}\label{sect:measurements}
Concerning the measured signals, two sets of measurements are supposed to be available: the first is needed for the design of a non-adaptive observer, whereas the second facilitates the synthesis of the adaptive one. Both types of signals considered in this paper may be acquired, as is typically the case, using accelerometers or strain gauges mounted on the rolling contact components. As these sensors rotate with the mechanical elements, they capture state variations in the Lagrangian reference frame, yielding direct measurements of true accelerations and velocities. In contrast, the PDE~\eqref{eq:originalSystemsPDE} is formulated in the Eulerian framework. The relationship between the Lagrangian and Eulerian time derivatives is given by
\begin{align}\label{eq:LagEul}
\dod{z(\xi,t)}{t} = \dpd{z(\xi,t)}{t} + \Lambda\dpd{z(\xi,t)}{\xi}.
\end{align}

Starting with~\eqref{eq:LagEul}, the first set of measurements is defined as
\begin{align}\label{eq:Y1}
Y_1(t) & = \dod{z(0,t)}{t} = \Lambda\dpd{z(0,t)}{\xi} =h_2\Bigl(v\bigl(X(t),U(t)\bigr)\Bigr),
\end{align}
giving
\begin{align}\label{eq:Y1upsilon}
v\bigl(X(t),U(t)\bigr) = h_2^{-1}\bigl(Y_1(t)\bigr).
\end{align}
Equations~\eqref{eq:Y1} and~\eqref{eq:Y1upsilon} permits reconstructing the rigid relative velocity directly from available measurements.
Accordingly, the following assumption is formulated.
\begin{assumption}[Detectability]\label{ass:rightInv}
The pair $(A_1, A_2)$ is detectable.
\end{assumption}
The second set of measurements is instead defined as
\begin{align}\label{eq:Y2}
\begin{split}
Y_2(t) & = \dod[2]{z(0,t)}{t} \\
&= \Lambda\Theta \Sigma\Bigl(v\bigl(X(t),U(t)\bigr)\Bigr) \dpd{z(0,t)}{\xi} + \dot{h}_2\Bigl(v\bigl(X(t),U(t)\bigr)\Bigr) \\
& = \Theta\Lambda \Sigma\Bigl(h_2^{-1}\bigl(Y_1(t)\bigr)\Bigr)\Lambda^{-1}Y_1(t)+\dot{Y}_1(t),
\end{split}
\end{align}
where the fact that $\Lambda$ and $\Theta$ commute has been used.

Before moving to the design of an adaptive observer, some considerations are formalized in Remark~\ref{remark:meas} below.
\begin{remark}\label{remark:meas}
As indicated by~\eqref{eq:Y1} and~\eqref{eq:Y2}, acquiring the full set of $2n_z$ measurements may initially seem like a stringent requirement. However, in most practical cases, certain components of $X(t)$ are readily accessible, enabling the recovery of the rigid relative velocity $v(X(t),U(t))$ by observing only a single component of the PDE subsystem. Moreover, the number of unknown parameters in the matrix $\Theta$ can often be reduced, which in turn decreases the dimensionality of $Y_2(t)$. Nevertheless, to present a simple, general, and compact framework for joint parameter and state estimation focused exclusively on boundary sensing, this paper assumes that the full set of $2n_z$ measurements is available.
\end{remark}

\section{Adaptive observer design}\label{sect:observer0}
The present section addresses the synthesis of an adaptive observer design for the ODE-PDE system~\eqref{eq:originalSystems}, assuming that the friction parameters contained in the matrix $\Theta$ are uncertain.

\subsection{Parameter identifiers and adaptive laws}\label{sect:parameter}
The first step consists of deriving a suitable adaptive law to estimate the unknown parameters contained in the matrix $\Theta$. The ODE-based approach adopted in this paper is partly inspired from \textcite{Ole1}, and makes only use of finite-dimensional tools. In particular, defining the vector $\mathbb{R}^{n_z}\ni 1_{n_z}\triangleq [1\; \dots \; 1]^{\mathrm{T}}$, and introducing the variables
\begin{align}\label{eq:sssss3}
Z_1(t) & \triangleq  1_{n_z}^{\mathrm{T}}Y_1(t), \quad Z_2(t) \triangleq  1_{n_z}^{\mathrm{T}}Y_2(t), \quad \theta  \triangleq  \Theta 1_{n_z},\nonumber \\
 \quad \phi(t) & \triangleq -\Lambda\Sigma\Bigl(h_2^{-1}\bigl(Y_1(t)\bigr)\Bigr)\Lambda^{-1}Y_1(t),
\end{align}
a linear parametric model may be derived as
\begin{align}\label{eq:parametric1}
\dot{Z}_1(t) = \theta^{\mathrm{T}}\phi(t) + Z_2(t).
\end{align}
The following filters may now be designed:
\begin{subequations}\label{eq:filters}
\begin{align}
\dot{\zeta}_1(t) & = -\varrho\bigl(\zeta_1(t)-Z_1(t)\bigr), \\
\dot{\zeta}_2(t) & = -\varrho\zeta_2(t) + Z_2(t), \\
\dot{\varphi}(t) & = -\varrho\varphi(t) + \phi(t), \quad t \in (0,T),\label{eq:varphi}
\end{align}
\end{subequations}
for some gain $\varrho \in \mathbb{R}_{>0}$. Accordingly, the non-adaptive estimate $\bar{Z}_1(t) \in \mathbb{R}$ of $Z_1(t)$ may be generated from
\begin{align}\label{eq:parametric1}
\bar{Z}_1(t) = \theta^{\mathrm{T}}\varphi(t) + \zeta(t),
\end{align}
with $\mathbb{R} \ni \zeta (t) \triangleq \zeta_1(t) + \zeta_2(t)$. It should be noted that $(X,z) \in C^0([0,T];\altmathcal{X})$ implies that $\bar{Z}_1, \zeta \in C^0([0,T];\mathbb{R})$ and $\varphi \in C^0([0,T];\mathbb{R}^{n_z})$, ensuring that the signals in~\eqref{eq:parametric1} are well-defined for both mild and classical solutions of~\eqref{eq:originalSystems}. Formally, the dynamics of the non-adaptive state error $\mathbb{R}\ni \tilde{Z}_1(t) \triangleq Z_1(t)-\bar{Z}_1(t)$ may be easily deduced to obey
\begin{align}
\dot{\tilde{Z}}_1(t) & = -\varrho\tilde{Z}_1(t), \quad t \in (0,T).
\end{align}
At this point, any standard adaptive law may be employed to estimate the parameters in $\theta$, as explained in \textcite{Ioannou}. Below, the gradient law with integral cost function is stated. In the following, the estimate of $\theta$ is denoted as $\hat{\theta}(t) \in \mathbb{R}^{n_z}$, whereas the parameter estimation error as $\mathbb{R}^{n_z} \ni \tilde{\theta}(t) \triangleq \theta-\hat{\theta}(t)$.

\begin{theorem}\label{thm:3.1}
Suppose that $U \in C^0([0,T];\mathbb{R}^{n_U})$, and consider the ODE-PDE interconnection~\eqref{eq:originalSystems}, with measurements~\eqref{eq:Y1} and~\eqref{eq:Y2}, and filters~\eqref{eq:filters}. If Assumption~\ref{ass:fwd} holds, the adaptive law
\begin{align}\label{eq:dottheta}
\dot{\hat{\theta}}(t)  = -\Gamma\bigl(R(t)\hat{\theta}(t) + Q(t)\bigr), \quad t \in (0,T),
\end{align}
where $\mathbf{Sym}_{n_z}(\mathbb{R})\ni \Gamma \succ 0$, and
\begin{subequations}\label{eq:RQ}
\begin{align}
\dot{R}(t) & = -\psi R(t) + \dfrac{\varphi(t)\varphi^{\mathrm{T}}(t)}{1 + \norm{\varphi(t)}_2^2}, \\
\dot{Q}(t) & = -\psi Q(t) - \dfrac{\bigl(Z_1(t)-\zeta(t)\bigr)\varphi (t)}{1 + \norm{\varphi(t)}_2^2}, \quad t \in (0,T),
\end{align}
\end{subequations}
with $R(t) \in \mathbf{M}_{n_z}(\mathbb{R})$, $Q(t) \in \mathbb{R}^{n_z}$, and $\psi \in \mathbb{R}_{>0}$, ensures that
\begin{subequations}
\begin{align}
\tilde{\theta}  & \in L^{\infty}(\mathbb{R}_{\geq 0};\mathbb{R}^{n_z}),\label{eq:thetaCOnd1} \\
\tilde{Z}_1 & \in L^{\infty}(\mathbb{R}_{\geq 0};\mathbb{R})\times L^2(\mathbb{R}_{> 0};\mathbb{R}), \\
\dot{\hat{\theta}} &  \in L^{\infty}(\mathbb{R}_{\geq 0};\mathbb{R}^{n_z})\times L^2(\mathbb{R}_{> 0};\mathbb{R}^{n_z}), \\
\lim_{t \to \infty} &\norm{\dot{\hat{\theta}}(t)}_2 = 0.
\end{align}
\end{subequations}
Moreover, if $\varphi \in L^\infty(\mathbb{R}_{\geq 0};\mathbb{R}^{n_z})$ in~\eqref{eq:varphi} is \emph{persistently exciting} (PE), then $\norm{\tilde{\theta}(t)}_2 \to 0$ exponentially fast.
\begin{proof}
See \textcite{Ioannou}.
\end{proof}
\end{theorem}

An estimate of the matrix $\Theta$ may thus be constructed as $\mathbf{Sym}_{n_z}(\mathbb{R}) \ni \hat{\Theta}(t) = \diag\{\hat{\theta}_1(t), \dots, \hat{\theta}_{n_z}(t)\}$, so that $\mathbf{Sym}_{n_z}(\mathbb{R})\ni \tilde{\Theta}(t) \triangleq \Theta - \hat{\Theta}(t) = \diag\{\tilde{\theta}_1(t), \dots, \tilde{\theta}_{n_z}(t)\}$. In this context, it should also be emphasized that~\eqref{eq:dottheta} and~\eqref{eq:RQ}, in conjunction with~\eqref{eq:sssss3} and~\eqref{eq:filters}, imply that $\hat{\Theta},\tilde{\Theta} \in C^1(\mathbb{R}_{\geq 0};\mathbf{Sym}_{n_z}(\mathbb{R}))$.
Obviously $\norm{\hat{\Theta}(t)} \leq \norm{\hat{\theta}(t)}_2$ and $\norm{\tilde{\Theta}(t)} \leq \norm{\tilde{\theta}(t)}_2$, and hence from~\eqref{eq:thetaCOnd1} it immediately follows that
\begin{align}
\norm{\hat{\Theta}}, \norm{\tilde{\Theta}}\in L^{\infty}(\mathbb{R}_{\geq 0};\mathbf{Sym}_{n_z}(\mathbb{R})).
\end{align}
Additionally, $\norm{\tilde{\Theta}(t)} \to 0$ exponentially fast whenever $\varphi \in L^\infty(\mathbb{R}_{\geq 0};\mathbb{R}^{n_z})$ in~\eqref{eq:varphi} is PE. Owing to these premises, it is possible to proceed with the synthesis of an adaptive observer to estimate the lumped and distributed states. This is the objective of the next Section~\ref{sect:observer}.

\subsection{Adaptive observer}\label{sect:observer}
Denoting the estimates of $X(t)$, $z(\xi,t)$, and $Y_1(t)$ respectively as $\hat{X}(t) \in \mathbb{R}^{n_X}$, $\hat{z}(\xi,t) \in \mathbb{R}^{n_z}$, and $\hat{Y}_1(t) \in \mathbb{R}^{n_z}$, the following observer structure is proposed:
\begin{subequations}\label{eq:observer}
\begin{align}
\begin{split}
& \dot{\hat{X}}(t) = A_1\hat{X}(t) + G_1(\mathscr{K}\hat{z})(t) \\
& \qquad \quad + G_1\hat{\Theta}(t)\Sigma\Bigl(h_2^{-1}\bigl(Y_1(t)\bigr)\Bigr)(\mathscr{K}_2\hat{z})(t) +  G_1h_1\Bigl(h_2^{-1}\bigl(Y_1(t)\bigr)\Bigr)  \\
& \qquad \quad-L_1\Bigl(h_2^{-1}\bigl(Y_1(t)\bigr)-h_2^{-1}\bigl(\hat{Y}_1(t)\bigr)\Bigr), \quad t \in (0,T),
\end{split} \label{eq:SystemsODEObs}\\
\begin{split}
& \dpd{\hat{z}(\xi,t)}{t} + \Lambda \dpd{\hat{z}(\xi,t)}{\xi} =\hat{\Theta}(t)\Sigma\Bigl(h_2^{-1}\bigl(Y_1(t)\bigr)\Bigr)\hat{z}(\xi,t) + Y_1(t)\\
& \qquad \qquad \qquad  \quad (\xi,t) \in (0,1) \times (0,T),
\end{split} \label{eq:SystemsPDEObs} \\
& \hat{z}(0,t) = 0, \quad t \in (0,T),\label{eq:SystemsBCObs}
\end{align}
\end{subequations}
where $L_1 \in \mathbf{M}_{n_X\times n_z}(\mathbb{R})$ is a matrix with constant coefficients. The estimated output reads
\begin{align}
\hat{Y}_1(t) = h_2\Bigl(v\bigl(\hat{X}(t),U(t)\bigr)\Bigr) = h_2\bigl(A_2\hat{X}(t) + G_2U(t)\bigr).
\end{align}
Consequently, defining the error variables as $\mathbb{R}^{n_X} \ni \tilde{X}(t) \triangleq X(t)-\hat{X}(t)$ and $\mathbb{R}^{n_z}\ni \tilde{z}(\xi,t) \triangleq z(\xi,t)-\hat{z}(\xi,t)$, and subtracting~\eqref{eq:observer} from~\eqref{eq:originalSystems}, the observer error dynamics is governed by
\begin{subequations}\label{eq:observerERR}
\begin{align}
\begin{split}
& \dot{\tilde{X}}(t) = A_1\tilde{X}(t) + G_1(\mathscr{K}\tilde{z})(t) \\
&\qquad \quad + G_1\hat{\Theta}(t)\Sigma\Bigl(v\bigl(X(t),U(t)\bigr)\Bigr)(\mathscr{K}_2\tilde{z})(t)  \\
 &\qquad \quad  + G_1\tilde{\Theta}(t)\Sigma\Bigl(v\bigl(X(t),U(t)\bigr)\Bigr)(\mathscr{K}_2z)(t)  \\
&\qquad \quad  +L_1\Bigl(h_2^{-1}\bigl(Y_1(t)\bigr)-h_2^{-1}\bigl(\hat{Y}_1(t)\bigr)\Bigr), \quad t \in (0,T),
\end{split} \label{eq:SystemsODEObsErr}\\
\begin{split}
& \dpd{\tilde{z}(\xi,t)}{t} + \Lambda \dpd{\tilde{z}(\xi,t)}{\xi} =\hat{\Theta}(t)\Sigma\Bigl(v\bigl(X(t),U(t)\bigr)\Bigr)\tilde{z}(\xi,t) \\
&\qquad \qquad +\tilde{\Theta}(t)\Sigma\Bigl(v\bigl(X(t),U(t)\bigr)\Bigr)z(\xi,t),\\
& \qquad \qquad (\xi,t) \in (0,1) \times (0,T),
\end{split} \label{eq:SystemsPDEObsErr} \\
& \tilde{z}(0,t) = 0, \quad t \in (0,T),\label{eq:SystemsBCObsErr}
\end{align}
\end{subequations}
where
\begin{align}
h_2^{-1}\bigl(Y_1(t)\bigr)-h_2^{-1}\bigl(\hat{Y}_1(t)\bigr) = v\bigl(\tilde{X}(t),0\bigr) = A_2\tilde{X}(t).
\end{align}
Recalling Assumption~\ref{ass:rightInv}, $L_1$ may be chosen such that $\mathbf{M}_{n_X}(\mathbb{R}) \ni \bar{A}_1 \triangleq A_1 + L_1A_2$ is Hurwitz. Thus, the observer error dynamics~\eqref{eq:observerERR} becomes
\begin{subequations}\label{eq:obSGAII}
\begin{align}
\begin{split}
& \dot{\tilde{X}}(t) = \bar{A}_1\tilde{X}(t) + G_1(\mathscr{K}_1\tilde{z})(t) + F_1(t)(\mathscr{K}_2\tilde{z})(t)   \\
& \qquad\quad+ f_1(t),\quad  t \in (0,T),
\end{split} \label{eq:SystemsODEObsErrL}\\
\begin{split}
& \dpd{\tilde{z}(\xi,t)}{t} + \Lambda \dpd{\tilde{z}(\xi,t)}{\xi} = F_2(t)\tilde{z}(\xi,t) +f_2(\xi,t),\\
&\qquad \qquad \qquad \qquad \qquad \qquad  (\xi,t) \in (0,1) \times (0,T),
\end{split} \label{eq:originalSystemsPDEObsErrL} \\
& \tilde{z}(0,t) = 0, \quad t \in (0,T),\label{eq:originalSystemsBCObsErrL}
\end{align}
\end{subequations}
where
\begin{subequations}\label{eq:Fns}
\begin{align}
F_1(t) &\triangleq G_1\hat{\Theta}(t)\Sigma\Bigl(v\bigl(X(t),U(t)\bigr)\Bigr), \\
F_2(t) &\triangleq \hat{\Theta}(t)\Sigma\Bigl(v\bigl(X(t),U(t)\bigr)\Bigr), 
\end{align}
\end{subequations}
and
\begin{subequations}
\begin{align}
f_1(t) & \triangleq G_1\tilde{\Theta}(t)\Sigma\Bigl(v\bigl(X(t),U(t)\bigr)\Bigr)(\mathscr{K}_2z)(t), \\
f_2(\xi,t) & \triangleq \tilde{\Theta}(t)\Sigma\Bigl(v\bigl(X(t),U(t)\bigr)\Bigr) z(\xi,t).
\end{align}
\end{subequations}
Assumption~\ref{ass:fwd} ensures that $(X,z) \in C^0(\mathbb{R}_{\geq 0};\altmathcal{X})\cap L^\infty(\mathbb{R}_{\geq 0};\altmathcal{X})$ and $U \in C^0(\mathbb{R}_{\geq 0}; \mathbb{R}^{n_U})\cap L^\infty(\mathbb{R}_{\geq 0}; \mathbb{R}^{n_U})$, which, combined with $\hat{\Theta},\tilde{\Theta} \in C^1(\mathbb{R}_{\geq 0};\mathbf{Sym}_{n_z}(\mathbb{R}))\cap L^\infty(\mathbb{R}_{\geq 0};\mathbf{Sym}_{n_z}(\mathbb{R}))$, implies that $F_1 \in C^0(\mathbb{R}_{\geq 0};\mathbf{M}_{n_X\times n_z}(\mathbb{R})) \cap L^\infty(\mathbb{R}_{\geq 0};\mathbf{M}_{n_X\times n_z}(\mathbb{R}))$, $F_2  \in C^0(\mathbb{R}_{\geq 0};\mathbf{M}_{n_z}(\mathbb{R}))\cap L^\infty(\mathbb{R}_{\geq 0};\mathbf{M}_{n_z}(\mathbb{R}))$, $f_1 \in C^0(\mathbb{R}_{\geq 0};\mathbb{R}^{n_X})\cap L^\infty(\mathbb{R}_{\geq 0};\mathbb{R}^{n_X})$, and $f_2 \in C^0(\mathbb{R}_{\geq 0};L^2((0,1);\mathbb{R}^{n_z}))\cap L^\infty(\mathbb{R}_{\geq 0};L^2((0,1);\mathbb{R}^{n_z}))$. Thus, similar arguments as those in the proof of Theorem~\ref{thm:mild} yield the existence of a unique (global) mild solution $(\tilde{X},\tilde{z}) \in C^0(\mathbb{R}_{\geq 0};\altmathcal{X})$ to~\eqref{eq:obSGAII} for all ICs $(\tilde{X}_0,\tilde{z}_0) \triangleq (\tilde{X}(0),\tilde{z}(\cdot,0)) \in \altmathcal{X}$. However, since $\Sigma(\cdot)$ is only supposed to be continuous in its argument, classical solutions to the ODE-PDE system~\eqref{eq:obSGAII} are not guaranteed by Theorem~\ref{thm:mild}.

Therefore, for $n\in\mathbb{N}_0$, the following regularized version of the ODE-PDE system~\eqref{eq:obSGAII} is also considered:  
\begin{subequations}\label{eq:obSGAIIn}
\begin{align}
\begin{split}
& \dot{\tilde{X}}^n(t) = \bar{A}_1\tilde{X}^n(t) + G_1(\mathscr{K}_1\tilde{z}^n)(t) + F_1^n(t)(\mathscr{K}_2\tilde{z}^n)(t)   \\
& \qquad\quad+ f_1^n(t),\quad  t \in (0,T),
\end{split} \label{eq:SystemsODEObsErrLn}\\
\begin{split}
& \dpd{\tilde{z}^n(\xi,t)}{t} + \Lambda\dpd{\tilde{z}^n(\xi,t)}{\xi} = F_2^n(t)\tilde{z}^n(\xi,t) +f_2^n(\xi,t),\\
&\qquad \qquad \qquad \qquad \qquad \qquad  (\xi,t) \in (0,1) \times (0,T),
\end{split} \label{eq:originalSystemsPDEObsErrLn} \\
& \tilde{z}^n(0,t) = 0, \quad t \in (0,T),\label{eq:originalSystemsBCObsErrLn}
\end{align}
\end{subequations}
where $F_1^n \in C^1(\mathbb{R}_{\geq 0};\mathbf{M}_{n_X\times n_z}(\mathbb{R})) \cap L^\infty(\mathbb{R}_{\geq 0};\mathbf{M}_{n_X\times n_z}(\mathbb{R}))$, $F_2^n  \in C^1(\mathbb{R}_{\geq 0};\mathbf{M}_{n_z}(\mathbb{R}))\cap L^\infty(\mathbb{R}_{\geq 0};\mathbf{M}_{n_z}(\mathbb{R}))$, $f_1^n \in C^1(\mathbb{R}_{\geq 0};\mathbb{R}^{n_X})\cap L^\infty(\mathbb{R}_{\geq 0};\mathbb{R}^{n_X})$, and $f_2^n \in C^1(\mathbb{R}_{\geq 0};L^2((0,1);\mathbb{R}^{n_z}))\cap L^\infty(\mathbb{R}_{\geq 0};L^2((0,1);\mathbb{R}^{n_z}))$ satisfy
\begin{subequations}\label{eq:Xnnnnnnnn}
\begin{align}
F_1^n & \xrightarrow[n \to \infty]{} F_1, \quad \textnormal{in}\; L^\infty(\mathbb{R}_{\geq 0};\mathbf{M}_{n_X\times n_z}(\mathbb{R})), \\
F_2^n & \xrightarrow[n \to \infty]{} F_2, \quad \textnormal{in}\; L^\infty(\mathbb{R}_{\geq 0};\mathbf{M}_{n_z}(\mathbb{R})), \\
f_1^n & \xrightarrow[n \to \infty]{} f_1, \quad \textnormal{in}\; L^\infty(\mathbb{R}_{\geq 0};\mathbb{R}^{n_X}), \\
f_2^n & \xrightarrow[n \to \infty]{} f_2, \quad \textnormal{in}\; L^\infty(\mathbb{R}_{\geq 0};L^2((0,1);\mathbb{R}^{n_z})).
\end{align}
\end{subequations}
Similar arguments as those in the proof of Theorem~\ref{thm:mild} ensure the existence of a unique (global) classical solution $(\tilde{X}^n,\tilde{z}^n) \in C^1(\mathbb{R}_{\geq 0};\altmathcal{X}) \times C^0(\mathbb{R}_{\geq 0}; \mathscr{D}(\mathscr{A}))$ to~\eqref{eq:obSGAIIn} for all ICs $(\tilde{X}_0^n,\tilde{z}_0^n) \triangleq (\tilde{X}^n(0),\tilde{z}^n(\cdot,0)) \in \mathscr{D}(\mathscr{A})$.

Lemma~\ref{lemma:convergence} below is propaedeutic to deriving the main result of this section.
\begin{lemma}\label{lemma:convergence}
Consider the mild solution $(\tilde{X},\tilde{z}) \in C^0([0,T];\altmathcal{X})$ of~\eqref{eq:obSGAII}, along with the classical solution $(\tilde{X}^n,\tilde{z}^n) \in C^1([0,T];\altmathcal{X})\cap C^0([0,T];\mathscr{D}(\mathscr{A}))$ of~\eqref{eq:obSGAIIn}. Then, for all and $(X,z) \in C^0(\mathbb{R}_{\geq 0};\altmathcal{X})\cap L^\infty(\mathbb{R}_{\geq 0};\altmathcal{X})$, $U \in C^0(\mathbb{R}_{\geq 0}; \mathbb{R}^{n_U})\cap L^\infty(\mathbb{R}_{\geq 0}; \mathbb{R}^{n_U})$ and $\hat{\Theta},\tilde{\Theta} \in C^1(\mathbb{R}_{\geq 0};\mathbf{Sym}_{n_z}(\mathbb{R}))\cap L^\infty(\mathbb{R}_{\geq 0};\mathbf{Sym}_{n_z}(\mathbb{R}))$, and ICs $(\tilde{X}_0,\tilde{z}_0)\in \altmathcal{X}$ and $(\tilde{X}_0^n,\tilde{z}_0^n)\in \mathscr{D}(\mathscr{A})$ satisfying
\begin{align}\label{eq:Xnnnnnn0}
(\tilde{X}_0^n, \tilde{z}_0^n) & \xrightarrow[n \to \infty]{} (\tilde{X}_0,\tilde{z}_0), \quad \textnormal{in}\; \altmathcal{X}, 
\end{align}
there exists $\omega_\rho \in \mathbb{R}_{> 0}$ such that
\begin{align}\label{eq:supCobv}
\sup_{t\in\mathbb{R}_{\geq 0}}\eu^{-\omega_\rho t}\norm{(\tilde{X}(t),\tilde{z}(\cdot,t))- (\tilde{X}^n(t),\tilde{z}^n(\cdot,t))}_{\altmathcal{X}} \xrightarrow[n \to \infty]{} 0.
\end{align}
\begin{proof}
See Appendix~\ref{app:cond}.
\end{proof}
\end{lemma}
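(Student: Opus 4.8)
The plan is to estimate the difference $(\tilde X - \tilde X^n, \tilde z - \tilde z^n)$ directly via its governing ODE--PDE system, obtained by subtracting~\eqref{eq:obSGAIIn} from~\eqref{eq:obSGAII}, and to close a Gronwall-type inequality in the $\altmathcal{X}$-norm with an exponential weight absorbing the (uniformly bounded, hence uniformly exponentially dominated) coefficient growth. Write $e_X \triangleq \tilde X - \tilde X^n$ and $e_z \triangleq \tilde z - \tilde z^n$. Then $e_X, e_z$ satisfy
\begin{subequations}
\begin{align}
\begin{split}
& \dot{e}_X(t) = \bar{A}_1 e_X(t) + G_1(\mathscr{K}_1 e_z)(t) + F_1(t)(\mathscr{K}_2 e_z)(t) \\
& \qquad + \bigl(F_1(t)-F_1^n(t)\bigr)(\mathscr{K}_2\tilde{z}^n)(t) + \bigl(f_1(t)-f_1^n(t)\bigr),
\end{split} \\
\begin{split}
& \dpd{e_z(\xi,t)}{t} + \Lambda\dpd{e_z(\xi,t)}{\xi} = F_2(t) e_z(\xi,t) \\
& \qquad + \bigl(F_2(t)-F_2^n(t)\bigr)\tilde{z}^n(\xi,t) + \bigl(f_2(\xi,t)-f_2^n(\xi,t)\bigr),
\end{split} \\
& e_z(0,t) = 0,
\end{align}
\end{subequations}
together with IC $(e_X(0), e_z(\cdot,0)) = (\tilde X_0 - \tilde X_0^n, \tilde z_0 - \tilde z_0^n)$. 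This is exactly of the form~\eqref{eq:obSGAII} but with the source terms replaced by the ``mismatch'' forcing terms, which by~\eqref{eq:Xnnnnnnnn} and the a priori bound $\tilde z^n \in L^\infty(\mathbb{R}_{\geq 0};L^2)$ (itself obtained from a separate energy estimate on~\eqref{eq:obSGAIIn}, since $F_2^n, f_2^n$ are uniformly bounded) tend to $0$ in $L^\infty(\mathbb{R}_{\geq 0};\cdot)$ as $n\to\infty$. I would treat $(e_X, e_z)$ as the mild solution and work with the mild-solution variation-of-constants representation, or equivalently with a Lyapunov functional of the form $W(t) = \norm{e_X(t)}_2^2 + \int_0^1 \eu^{-\mu\xi} e_z^{\mathrm T}(\xi,t) e_z(\xi,t)\dif\xi$ for a suitable $\mu>0$, which handles the transport term through the boundary condition $e_z(0,t)=0$ and the spatial weight, à la \cite{Coron}.

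The key steps, in order: (i) establish a uniform-in-$n$ a priori bound $\sup_n \norm{\tilde z^n}_{L^\infty(\mathbb{R}_{\geq 0};L^2)} < \infty$ via an energy estimate on~\eqref{eq:obSGAIIn} using only the uniform bounds on $F_2^n, f_2^n$; (ii) set up $W(t)$ and compute $\dot W(t)$ along the mild solution of the mismatch system, bounding the transport contribution using $e_z(0,t)=0$ and integration by parts with the weight $\eu^{-\mu\xi}$, the $F_i$-terms using $\norm{F_i}_{L^\infty}$, and the $\mathscr{K}_i$-terms using $\norm{\mathscr{K}_1}, \norm{\mathscr{K}_2}$ and Cauchy--Schwarz; (iii) collect the result into the differential inequality $\dot W(t) \leq c_1 W(t) + c_2 \delta_n^2$, where $c_1, c_2$ are constants depending only on the uniform bounds (not on $n$) and $\delta_n \triangleq \norm{F_1-F_1^n}_{L^\infty} + \norm{F_2-F_2^n}_{L^\infty} + \norm{f_1-f_1^n}_{L^\infty} + \norm{f_2-f_2^n}_{L^\infty} + \norm{(e_X(0),e_z(\cdot,0))}_{\altmathcal{X}} \to 0$; (iv) apply Gronwall to get $W(t) \leq \eu^{c_1 t}(W(0) + (c_2/c_1)\delta_n^2)$, hence $\norm{(e_X(t),e_z(\cdot,t))}_{\altmathcal{X}}^2 \leq C\eu^{c_1 t}\delta_n^2$ uniformly in $t$; (v) choose $\omega_\rho > c_1/2$, so that $\eu^{-\omega_\rho t}\norm{(e_X(t),e_z(\cdot,t))}_{\altmathcal{X}} \leq \sqrt{C}\,\delta_n \to 0$, and take the supremum over $t\in\mathbb{R}_{\geq 0}$ to conclude~\eqref{eq:supCobv}.

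The main obstacle I anticipate is handling the non-local coupling terms $G_1(\mathscr{K}_1 e_z)$ and $F_1(t)(\mathscr{K}_2 e_z)$ feeding the ODE component, and the absence of any dissipative structure in the PDE (the transport is pure advection with a boundary-fed zero condition but a possibly sign-indefinite reaction matrix $F_2(t)$). Unlike a damped wave equation, here the only ``stabilizing'' mechanism is that $e_z(0,t)=0$ combined with finite propagation — so one does not get decay, only a controlled exponential growth rate, which is why the statement is phrased with the exponential weight $\eu^{-\omega_\rho t}$ rather than claiming boundedness. Carefully one must verify that the weighted Lyapunov computation is legitimate for mild (not classical) solutions of the mismatch system; the standard resolution is to run the estimate first on the regularized/classical level and pass to the limit, or to invoke that the mild solution is the $C^0$-limit of classical solutions (as already used for~\eqref{eq:obSGAII}), so the differential inequality holds in integrated form. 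Once the uniform constants $c_1, c_2$ are secured independently of $n$, the rest is routine Gronwall and a choice of $\omega_\rho$.
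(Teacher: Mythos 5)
Your proposal is sound and would yield the lemma, but it implements the key idea by a different technical route than the paper. You subtract the two systems, isolate the vanishing mismatch forcings $(F_i-F_i^n)$ and $(f_i-f_i^n)$, and close a weighted Lyapunov/Gr\"onwall estimate on the difference, accepting a positive growth rate $c_1$ and paying for it with the weight $\eu^{-\omega_\rho t}$ at the end. The paper instead performs the exponential rescaling \emph{first} ($W=\tilde{X}\eu^{-\rho t}$, $w=\tilde{z}\eu^{-\rho t}$), which turns the linear part (including the non-local $\mathscr{K}_1$ coupling) into a generator $\mathscr{A}_\rho\in\mathscr{G}(\altmathcal{X};1,-\omega_\rho)$ with $\omega_\rho$ arbitrarily large; it then writes both solutions via the variation-of-constants formula, subtracts, bounds the convolution by $\norm{F}_\infty/\omega_\rho$ times the sup-norm of the difference, and absorbs that term by a small-gain choice $\omega_\rho>\norm{F}_\infty$ rather than by Gr\"onwall. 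The paper's route has two practical advantages that you should note. First, it is valid verbatim for mild solutions, whereas differentiating your functional $W(t)$ along the mismatch trajectory requires exactly the kind of density/approximation argument the lemma is meant to supply — you flag this and correctly name the fix (work in integral form), but the clean resolution is precisely to stay with the mild representation throughout, as the paper does. Second, $\mathscr{K}_1$ contains the trace term $K_2\zeta(1)$ and is therefore \emph{not} bounded on $L^2((0,1);\mathbb{R}^{n_z})$, so your step ``bound the $\mathscr{K}_i$-terms using $\norm{\mathscr{K}_1}$ and Cauchy--Schwarz'' cannot be taken literally in $\altmathcal{X}$; in the Lyapunov route the trace must be absorbed by the boundary dissipation $-\tfrac{1}{2}\eu^{-\mu}e_z^{\mathrm{T}}(1,t)\Lambda e_z(1,t)$ produced by the integration by parts (as in the proof of Theorem~\ref{thmObse}), while in the semigroup route it is folded into $\mathscr{A}_\rho$. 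Your step (i) — establishing the uniform-in-$n$ bound on $\tilde{z}^n$ by a separate energy estimate on the regularized system — is actually cleaner than the paper's appeal to Theorem~\ref{thmObse} at that point. With the trace term handled as above and the estimate run in integrated (mild) form, your argument goes through.
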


\begin{theorem}\label{thmObse}
Suppose that $\Sigma \in C^0(\mathbb{R}^{n_z};\mathbf{M}_{n_z}(\mathbb{R}))$, $h_1, h_2 \in C^0(\mathbb{R}^{n_z};\mathbb{R}^{n_z})$ are locally Lipschitz continuous, and $U \in C^0([0,T];\mathbb{R}^{n_U})$, and consider the mild solutions $(X,z), (\tilde{X},\tilde{z}) \in C^0([0,T];\altmathcal{X})$ of~\eqref{eq:originalSystems} and~\eqref{eq:obSGAII}, respectively, together with the adaptive law~\eqref{eq:dottheta} and~\eqref{eq:RQ}. Then, if Assumptions~\ref{ass:fwd} and~\ref{ass:rightInv} hold, $(\tilde{X},\tilde{z}) \in L^\infty(\mathbb{R}_{\geq 0}; \altmathcal{X})$. Moreover, if $\varphi \in L^\infty(\mathbb{R}_{\geq 0};\mathbb{R}^{n_z})$ in~\eqref{eq:varphi} is PE, $\norm{(\tilde{X}(t), \tilde{z}(\cdot,t))}_{\altmathcal{X}} \to 0$ exponentially fast.

\begin{proof}
The strategy consists of proving the result for the regularized ODE-PDE system~\eqref{eq:obSGAIIn}, for whose classical solutions a suitable energy estimate may be derived using standard Lyapunov arguments. Resorting to~\eqref{eq:supCobv}, the obtained bound is then extended to mild solutions of ~\eqref{eq:obSGAII}.
To this end, the following Lyapunov function candidate is considered:
\begin{align}\label{eq:LyapReg}
\begin{split}
V\bigl(\tilde{X}^n(t),\tilde{z}^n(\cdot,t)\bigr)& \triangleq \dfrac{1}{2}\tilde{X}^{n\mathrm{T}}(t)P\tilde{X}^n(t) \\
& \quad +\dfrac{\gamma}{2} \int_0^1 \eu^{-\alpha\xi}\tilde{z}^{n\mathrm{T}}(\xi,t)\tilde{z}^n(\xi,t) \dif \xi,
\end{split}
\end{align}
for a positive definite matrix $\mathbf{Sym}_{n_X}(\mathbb{R}) \ni P \succ0$ and constants $\alpha, \gamma \in \mathbb{R}_{>0}$ to be appropriately specified.

Taking the derivative of~\eqref{eq:LyapReg} along the dynamics~\eqref{eq:obSGAIIn} yields
\begin{align}\label{eq:Vpppp}
\begin{split}
\dot{V}(t) &= \dfrac{1}{2}\tilde{X}^{n\mathrm{T}}(t)\Bigl(\bar{A}_1^{\mathrm{T}}P + P\bar{A}_1\Bigr)\tilde{X}^n(t) +\tilde{X}^{n\mathrm{T}}(t)Pf_1^n(t) \\
& \quad + \tilde{X}^{n\mathrm{T}}(t)P\bigl[G_1(\mathscr{K}_1\tilde{z}^n)(t) + F_1^n(t)(\mathscr{K}_2\tilde{z}^n)(t) \bigr] \\
& \quad - \gamma\int_0^1\eu^{-\alpha\xi}\tilde{z}^{n\mathrm{T}}(\xi,t)\Lambda\dpd{\tilde{z}^n(\xi,t)}{\xi} \dif \xi \\
& \quad +  \gamma\int_0^1\eu^{-\alpha\xi}\tilde{z}^{n\mathrm{T}}(\xi,t)F_2^n(t)\tilde{z}^n(\xi,t) \dif \xi \\
& \quad +\gamma\int_0^1\eu^{-\alpha\xi}\tilde{z}^{n\mathrm{T}}(\xi,t)f_2^n(\xi,t) \dif \xi, \quad t\in(0,T).
\end{split}
\end{align}
Assumption~\ref{ass:rightInv} implies that the matrix $P$ may be selected such that $\bar{A}_1^{\mathrm{T}}P + P\bar{A}_1 = -2pI_{n_X}$, with $p \in \mathbb{R}_{>0}$ aribitrarily large. Moreover, integrating by parts the fourth term in~\eqref{eq:Vpppp} and using the BC~\eqref{eq:originalSystemsBCObsErrLn} provides
\begin{align}\label{eq:Vpppp2}
\begin{split}
\dot{V}(t) &= -p\norm{\tilde{X}^n(t)}_2^2+\tilde{X}^{n\mathrm{T}}(t)Pf_1^n(t)  \\
& \quad + \tilde{X}^{n\mathrm{T}}(t)P\bigl[G_1(\mathscr{K}_1\tilde{z}^n)(t) + F_1^n(t)(\mathscr{K}_2\tilde{z}^n)(t) \bigr] \\
& \quad -\dfrac{\gamma}{2}\eu^{-\alpha}\tilde{z}^{n\mathrm{T}}(1,t)\Lambda\tilde{z}^n(1,t)\\
& \quad -\dfrac{\gamma\alpha}{2} \int_0^1\eu^{-\alpha\xi}\tilde{z}^{n\mathrm{T}}(\xi,t)\Lambda\tilde{z}^n(\xi,t) \dif \xi \\
& \quad +\gamma  \int_0^1\eu^{-\alpha\xi}\tilde{z}^{n\mathrm{T}}(\xi,t)F_2^n(t)\tilde{z}^n(\xi,t) \dif \xi \\
& \quad +\gamma\int_0^1\eu^{-\alpha\xi}\tilde{z}^{n\mathrm{T}}(\xi,t)f_2^n(\xi,t) \dif \xi, \quad t\in(0,T).
\end{split}
\end{align}
From~\eqref{eq;operatorsKs}, it may also be inferred that there exist $\eta_1,\eta_2 \in \mathbb{R}_{>0}$ such that
\begin{subequations}
\begin{align}
(\mathscr{K}_1\tilde{z}^n)(t)& \leq \eta_1\bigl(\norm{\tilde{z}^n(\cdot,t)}_{L^2((0,1);\mathbb{R}^{n_z})} + \norm{\tilde{z}^n(1,t)}_2\bigr), \\
(\mathscr{K}_2\tilde{z}^n)(t)& \leq \eta_2\norm{\tilde{z}^n(\cdot,t)}_{L^2((0,1);\mathbb{R}^{n_z})}.
\end{align}
\end{subequations}
Moreover, by virtue of~\eqref{eq:Fns} and~\eqref{eq:Xnnnnnnnn}, there exist $\eta_3,\eta_4 \in \mathbb{R}_{>0}$ such that
\begin{subequations}
\begin{align}
\norm{F_1(\cdot)}_\infty, \norm{F_1^n(\cdot)}_\infty & \leq \eta_3, \\
\norm{F_2(\cdot)}_\infty, \norm{F_2^n(\cdot)}_\infty & \leq \eta_4, \quad n \in \mathbb{N}_0.
\end{align}
\end{subequations}
Consequently, the first two cross terms appearing in~\eqref{eq:Vpppp2} may be bounded as
\begin{subequations}\label{eq;bound1}
\begin{align}
& \tilde{X}^{n\mathrm{T}}(t)Pf_1^n(t) \leq \dfrac{p}{4}\norm{\tilde{X}^n(t)}_2^2 + \dfrac{\norm{P}^2}{p}\norm{f_1^n(t)}_2^2, \\
\begin{split}
& \tilde{X}^{n\mathrm{T}}(t)P\bigl[G_1(\mathscr{K}_1\tilde{z}^n)(t) + F_1^n(t)(\mathscr{K}_2\tilde{z}^n)(t) \bigr]\leq \dfrac{p}{4}\norm{\tilde{X}^n(t)}_2^2 \\
& \quad  + \dfrac{2\norm{P}^2\eta_5^2}{p}\Bigl(\norm{\tilde{z}^n(\cdot,t)}_{L^2((0,1);\mathbb{R}^{n_z})}^2 + \norm{\tilde{z}^n(1,t)}_2^2 \Bigr),
\end{split}
\end{align}
\end{subequations}
with $\mathbb{R}_{> 0} \ni \eta_5 \triangleq \norm{G_1}\eta_1 + \eta_2\eta_3$. Similarly, the last two cross terms in~\eqref{eq:Vpppp2} may be bounded as
\begin{subequations}\label{eq;bound2}
\begin{align}
\begin{split}
& \int_0^1\eu^{-\alpha\xi}\tilde{z}^{n\mathrm{T}}(\xi,t)F_2^n(t)\tilde{z}^n(\xi,t) \dif \xi \\
& \quad \leq \eta_4\int_0^1\eu^{-\alpha\xi}\norm{\tilde{z}^n(\xi,t)}_2^2 \dif \xi, 
\end{split}\\
\begin{split}
& \int_0^1\eu^{-\alpha\xi}\tilde{z}^{n\mathrm{T}}(\xi,t)f_2^n(\xi,t) \dif \xi \\
& \quad \leq \dfrac{\alpha\lambda\ped{min}(\Lambda)}{4}\int_0^1\eu^{-\alpha\xi}\norm{\tilde{z}^n(\xi,t)}_2^2 \dif \xi  \\
& \qquad  + \dfrac{1}{\alpha\lambda\ped{min}(\Lambda)}\norm{f_2^n(\cdot,t)}_{L^2((0,1);\mathbb{R}^{n_z})}^2,
\end{split}
\end{align}
\end{subequations}
with $\mathbb{R}_{>0} \ni \lambda\ped{min}(\Lambda)$ denoting the smallest eigenvalue of $\Lambda$.
Combining~\eqref{eq;bound1},~\eqref{eq;bound2}, and~\eqref{eq:Vpppp2}, and selecting $\alpha \in \mathbb{R}_{>0}$ satisfying $\alpha > \frac{2\eta_4}{\lambda\ped{min}(\Lambda)}$ ensures the existence of $\eta_6 \in \mathbb{R}_{>0}$ such that
\begin{align}\label{eq:Vpppp3}
\begin{split}
\dot{V}(t) &= -\dfrac{p}{2}\norm{\tilde{X}^n(t)}_2^2+\dfrac{\norm{P}^2}{p}\norm{f_1^n(t)}_2^2 \\
& \quad -\Biggl(\dfrac{\gamma}{2}\lambda\ped{min}(\Lambda)\eu^{-\alpha}-\dfrac{2\norm{P}^2\eta_5^2}{p}\Biggr)\norm{\tilde{z}^n(1,t)}_2^2\\
& \quad -\Biggl(\dfrac{\gamma\eta_6}{2} - \dfrac{2\norm{P}^2\eta_5^2}{p}\Biggr)\norm{\tilde{z}^n(\cdot,t)}_{L^2((0,1);\mathbb{R}^{n_z})}^2 \\
& \quad +\dfrac{\gamma}{\alpha\lambda\ped{min}(\Lambda)}\norm{f_2^n(\cdot,t)}_{L^2((0,1);\mathbb{R}^{n_z})}^2, \quad t\in(0,T).
\end{split}
\end{align}
Hence, choosing $\gamma \in \mathbb{R}_{>0}$ sufficiently large implies the existence of constants $\bar{\omega},\eta \in \mathbb{R}_{>0}$ such that
\begin{align}
\dot{V}(t) \leq - \bar{\omega} V(t) + \eta \norm{f^n(\cdot,t)}_{\altmathcal{X}}^2, \quad t\in(0,T),
\end{align}
with $\mathbb{R}^{n_X+n_z}\ni f^n(\xi,t) \triangleq [f_1^{n\mathrm{T}}(t)\; f_2^{n\mathrm{T}}(\xi,t)]^{\mathrm{T}}$.
Thus, applying Grönwall-Bellman's inequality and observing that the Lyapunov function $V(X^n(t),\tilde{z}^n(\cdot,t))$ is equivalent to the squared norm $\norm{(\tilde{X}^n(t), \tilde{z}^n(\cdot,t))}_{\altmathcal{X}}^2$ on $\altmathcal{X}$ provides
\begin{align}\label{eqLJJJD0}
\begin{split}
& \norm{(\tilde{X}^n(t), \tilde{z}^n(\cdot,t))}_{\altmathcal{X}}  \leq \beta_1\eu^{-\sigma t}\norm{(\tilde{X}_0^n, \tilde{z}_0^n(\cdot))}_{\altmathcal{X}} \\
& \quad + \beta_2 \sqrt{\int_0^t\eu^{-\bar{\omega}(t-t^\prime)}\norm{f^n(\cdot,t^\prime)}_{\altmathcal{X}}^2 \dif t^\prime}, \quad t\in[0,T],
\end{split}
\end{align}
for some $\beta_1,\beta_2, \sigma\in \mathbb{R}_{>0}$. Finally, recalling~\eqref{eq:supCobv}, letting $n \to \infty$ in~\eqref{eqLJJJD0}, and invoking the lower-semicontinuity property of the norms together with the Dominated Convergence Theorem produces
\begin{align}\label{eqLJJJD}
\begin{split}
& \norm{(\tilde{X}(t), \tilde{z}(\cdot,t))}_{\altmathcal{X}}  \leq \beta_1\eu^{-\sigma t}\norm{(\tilde{X}_0, \tilde{z}_0(\cdot))}_{\altmathcal{X}} \\
& \quad + \beta_2 \sqrt{\int_0^t\eu^{-\bar{\omega}(t-t^\prime)}\norm{f(\cdot,t^\prime)}_{\altmathcal{X}}^2 \dif t^\prime}, \quad t\in[0,T],
\end{split}
\end{align}
where $\mathbb{R}^{n_X+n_z}\ni f(\xi,t) \triangleq [f_1^{\mathrm{T}}(t)\; f_2^{\mathrm{T}}(\xi,t)]^{\mathrm{T}}$. The conclusion follows immediately from the bound~\eqref{eqLJJJD}, in conjunction with the assumed behavior of $\norm{f(\cdot,t)}_{\altmathcal{X}}$, which is dictated by that of $z(\xi,t)$, $(\mathscr{K}_2z)(t)$, and $\tilde{\Theta}(t)$.
\end{proof}
\end{theorem}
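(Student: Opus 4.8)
The plan is to first establish an exponential-type energy estimate for the \emph{regularized} error dynamics~\eqref{eq:obSGAIIn}, whose classical solutions make the integration-by-parts and termwise-differentiation steps of a Lyapunov argument legitimate, and then to transfer this estimate to the mild solutions of~\eqref{eq:obSGAII} by letting the regularization index go to infinity, invoking Lemma~\ref{lemma:convergence}.

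First I would fix the weighted Lyapunov functional
\begin{equation*}
V\bigl(\tilde{X}^n,\tilde{z}^n\bigr) \triangleq \tfrac12\,\tilde{X}^{n\mathrm{T}}P\tilde{X}^n + \tfrac{\gamma}{2}\int_0^1\eu^{-\alpha\xi}\tilde{z}^{n\mathrm{T}}(\xi)\tilde{z}^n(\xi)\dif\xi ,
\end{equation*}
where, since $\bar{A}_1 = A_1 + L_1A_2$ is Hurwitz by Assumption~\ref{ass:rightInv} and the choice of $L_1$, the symmetric $P\succ0$ is taken to solve $\bar{A}_1^{\mathrm{T}}P + P\bar{A}_1 = -2pI_{n_X}$ with $p>0$ free, and $\alpha,\gamma>0$ are to be fixed. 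Differentiating $V$ along~\eqref{eq:obSGAIIn}, integrating the transport term by parts and using $\tilde{z}^n(0,t)=0$ yields the negative boundary contribution $-\tfrac{\gamma}{2}\eu^{-\alpha}\tilde{z}^{n\mathrm{T}}(1,t)\Lambda\tilde{z}^n(1,t)$ and the in-domain dissipation $-\tfrac{\gamma\alpha}{2}\int_0^1\eu^{-\alpha\xi}\tilde{z}^{n\mathrm{T}}\Lambda\tilde{z}^n\dif\xi$. The remaining cross terms — the ODE forcing $\tilde{X}^{n\mathrm{T}}Pf_1^n$, the coupling $\tilde{X}^{n\mathrm{T}}P[G_1(\mathscr{K}_1\tilde{z}^n)+F_1^n(\mathscr{K}_2\tilde{z}^n)]$, the reaction $\gamma\int\eu^{-\alpha\xi}\tilde{z}^{n\mathrm{T}}F_2^n\tilde{z}^n$, and the PDE forcing $\gamma\int\eu^{-\alpha\xi}\tilde{z}^{n\mathrm{T}}f_2^n$ — are controlled via Young's inequality, using the operator estimates on $\mathscr{K}_1,\mathscr{K}_2$ inferred from~\eqref{eq;operatorsKs} and the bounds $\norm{F_1^n}_\infty\leq\eta_3$, $\norm{F_2^n}_\infty\leq\eta_4$ that are uniform in $n$ by~\eqref{eq:Xnnnnnnnn}. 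Choosing $\alpha$ large enough to dominate $\eta_4/\lambda\ped{min}(\Lambda)$, then $\gamma$ large enough to absorb the $\mathscr{K}$-coupling into the $\tilde{z}^n$-dissipation, then $p$ large, I obtain
\begin{equation*}
\dot{V}(t) \leq -\bar{\omega}V(t) + \eta\,\norm{f^n(\cdot,t)}_{\altmathcal{X}}^2 ,\qquad f^n\triangleq\bigl[f_1^{n\mathrm{T}}\;\;f_2^{n\mathrm{T}}\bigr]^{\mathrm{T}},
\end{equation*}
with $\bar{\omega},\eta>0$ \emph{independent of $n$}.

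Since $V$ is equivalent to $\norm{\cdot}_{\altmathcal{X}}^2$, Grönwall--Bellman then gives the variation-of-constants bound~\eqref{eqLJJJD0} for $(\tilde{X}^n,\tilde{z}^n)$ with $n$-uniform constants $\beta_1,\beta_2,\sigma$. I would then send $n\to\infty$: Lemma~\ref{lemma:convergence} gives convergence of $(\tilde{X}^n,\tilde{z}^n)$ to $(\tilde{X},\tilde{z})$ in the weighted supremum norm, the initial data converge in $\altmathcal{X}$, and $f^n\to f$ in $L^\infty(\mathbb{R}_{\geq0};\altmathcal{X})$ by~\eqref{eq:Xnnnnnnnn}; lower semicontinuity of $\norm{\cdot}_{\altmathcal{X}}$ on the left and the Dominated Convergence Theorem applied to the convolution integral on the right yield~\eqref{eqLJJJD} for the mild solution. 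Boundedness $(\tilde{X},\tilde{z})\in L^\infty(\mathbb{R}_{\geq0};\altmathcal{X})$ is then immediate, because Assumption~\ref{ass:fwd} makes $z$ and $(\mathscr{K}_2z)$ bounded and $\tilde{\Theta}\in L^\infty$, so $\norm{f(\cdot,t)}_{\altmathcal{X}}$ is bounded; and if $\varphi$ is PE then $\norm{\tilde{\Theta}(t)}\to0$ exponentially, hence $\norm{f(\cdot,t)}_{\altmathcal{X}}\to0$ exponentially, which makes the convolution term in~\eqref{eqLJJJD} decay exponentially (at rate $\min\{\bar{\omega}/2,\text{decay rate of }f\}$, with an extra linear factor if the two rates coincide), so $\norm{(\tilde{X}(t),\tilde{z}(\cdot,t))}_{\altmathcal{X}}\to0$ exponentially.

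The main obstacle is precisely the regularity gap. Because $\Sigma$ is only continuous, the error system~\eqref{eq:obSGAII} admits only mild solutions, for which neither the pointwise-in-$t$ differentiation of $V$ nor the spatial integration by parts used above is a priori legitimate. The resolution — approximating the data $(F_1,F_2,f_1,f_2)$ by $C^1$ functions converging in $L^\infty$, carrying out the Lyapunov estimate for the resulting classical solutions with constants that do not depend on the approximation index, and then passing to the limit through the uniform convergence of Lemma~\ref{lemma:convergence} — is what makes the argument rigorous. The delicate bookkeeping is to keep $\bar{\omega},\eta,\beta_1,\beta_2,\sigma$ genuinely $n$-independent, which rests on the uniform bounds~\eqref{eq:Xnnnnnnnn}; the rest is routine Young/Grönwall estimation.
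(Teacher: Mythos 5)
Your proposal is correct and follows essentially the same route as the paper's proof: the identical weighted Lyapunov functional for the regularized system~\eqref{eq:obSGAIIn}, the same integration-by-parts and Young-inequality bookkeeping with $n$-uniform constants, Grönwall--Bellman, and the passage to the mild solution via Lemma~\ref{lemma:convergence}, lower semicontinuity, and dominated convergence. Your added remark on the decay rate of the convolution term (the $\min\{\bar{\omega}/2,\cdot\}$ rate with a possible linear factor at resonance) is a correct refinement that the paper leaves implicit.
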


\begin{theorem}\label{thm:Thm33}
Suppose that $\Sigma \in C^1(\mathbb{R}^{n_z};\mathbf{M}_{n_z}(\mathbb{R}))$, $h_1, h_2 \in C^1(\mathbb{R}^{n_z};\mathbb{R}^{n_z})$, and $U \in C^1([0,T];\mathbb{R}^{n_U})$, and consider the classical solutions $(X,z), (\tilde{X},\tilde{z}) \in C^1([0,T];\altmathcal{X})\cap C^0([0,T];\mathscr{D}(\mathscr{A}))$ of~\eqref{eq:originalSystems} and~\eqref{eq:obSGAII}, respectively, together with the adaptive law~\eqref{eq:dottheta} and~\eqref{eq:RQ}. Then, if Assumptions~\ref{ass:fwd} and~\ref{ass:rightInv} hold, $(\tilde{X},\tilde{z}) \in L^\infty(\mathbb{R}_{\geq 0}; \altmathcal{Y})$. Moreover, if $\varphi \in L^\infty(\mathbb{R}_{\geq 0};\mathbb{R}^{n_z})$ in~\eqref{eq:varphi} is PE, $\norm{(\tilde{X}(t), \tilde{z}(\cdot,t))}_{\altmathcal{Y}} \to 0$ exponentially fast.
\begin{proof}
If $\Sigma \in C^1(\mathbb{R}^{n_z};\mathbf{M}_{n_z}(\mathbb{R}))$, $h_1 \in C^1(\mathbb{R}^{n_z};\mathbb{R}^{n_X})$, $h_2 \in C^1(\mathbb{R}^{n_z};\mathbb{R}^{n_z})$, and $U \in C^1([0,T];\mathbb{R}^{n_U})$, the observer error dynamics~\eqref{eq:obSGAII} indeed admits a unique (global) classical solution $(\tilde{X},\tilde{z}) \in C^1(\mathbb{R}_{\geq 0};\altmathcal{X})\cap C^0(\mathbb{R}_{\geq 0};\mathscr{D}(\mathscr{A}))$. Therefore, the bound~\eqref{eqLJJJD} may be derived by considering directly the Lyapunov function $V(\tilde{X}(t),\tilde{z}(\cdot,t))$ defined according to~\eqref{eq:LyapReg}, implying that $(\tilde{X},\tilde{z}) \in L^\infty(\mathbb{R}_{\geq 0};\altmathcal{X})$, and $\norm{(\tilde{X}(t),\tilde{z}(\cdot,t))}_{\altmathcal{X}}\to 0$ exponentially fast whenever $\varphi \in L^\infty(\mathbb{R}_{\geq 0};\mathbb{R}^{n_z})$ in~\eqref{eq:varphi} is PE. Additionally, defining $\mathbb{R}^{n_z}\ni \tilde{\zeta}(\xi,t) \triangleq \pd{\tilde{z}(\xi,t)}{\xi}$ and taking the partial derivative of~\eqref{eq:originalSystemsPDEObsErrL} yields
\begin{subequations}\label{eq:dyn2}
\begin{align}
\begin{split}
& \dpd{\tilde{\zeta}(\xi,t)}{t} + \Lambda\dpd{\tilde{\zeta}(\xi,t)}{\xi} = F_2(t)\tilde{\zeta}(\xi,t) + f_3(\xi,t), \\
& \qquad \qquad \qquad \qquad \qquad \qquad (\xi,t)\in(0,1)\times(0,T), 
\end{split}\\
& \tilde{\zeta}(0,t) = 0, \quad t \in(0,T),\label{eq:originalSystemsBCObsErrLn2}
\end{align}
\end{subequations}
with
\begin{align}
f_3(\xi,t) & \triangleq  \dpd{f_2(\xi,t)}{\xi} = \tilde{\Theta}(t)\Bigl(v\bigl(X(t),U(t)\bigr)\Bigr) \dpd{z(\xi,t)}{\xi}.
\end{align}
In conjunction with Assumption~\ref{ass:fwd}, $(X,z)\in C^1(\mathbb{R}_{\geq 0};\altmathcal{X})\cap C^0(\mathbb{R}_{\geq 0};\mathscr{D}(\mathscr{A}))$ implies that $f_3 \in C^0(\mathbb{R}_{\geq 0};\altmathcal{X})\cap L^\infty(\mathbb{R}_{\geq 0};\altmathcal{X})$. 

Hence, for classical solutions of~\eqref{eq:dyn2}, using standard Lyapunov arguments, it is possible to derive an energy estimate of the form
\begin{align}\label{eqLJJJD2}
\begin{split}
& \norm{\tilde{\zeta}(\cdot,t)}_{L^2((0,1);\mathbb{R}^{n_z})}  \leq \beta_3\eu^{-\sigma^\prime t} \norm{\tilde{\zeta}_0(\cdot)}_{L^2((0,1);\mathbb{R}^{n_z})} \\
& \quad + \beta_4 \sqrt{\int_0^t\eu^{-\bar{\omega}^\prime(t-t^\prime)}\norm{f_3(\cdot,t^\prime)}_{L^2((0,1);\mathbb{R}^{n_z})}^2 \dif t^\prime}, \quad t\in[0,T],
\end{split}
\end{align}
for some $\beta_3, \beta_4, \bar{\omega}^\prime, \sigma^\prime \in \mathbb{R}_{>0}$. The above bound~\eqref{eqLJJJD2} may be extended to mild solutions of~\eqref{eq:dyn2} following a similar rationale as that adopted in the proof of Lemma~\ref{lemma:convergence}. The conclusion is an immediate consequence of the estimate~\eqref{eqLJJJD2} and the definition of $\tilde{\zeta}(\xi,t)$.
\end{proof}
\end{theorem}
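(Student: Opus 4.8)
The plan is to bootstrap from the $\altmathcal{X}$-estimate already available in the $C^1$ setting to an $\altmathcal{Y}$-estimate by propagating the same Lyapunov machinery to the spatial derivative $\tilde{\zeta} \triangleq \pd{\tilde{z}}{\xi}$. First I would note that, under the $C^1$ hypotheses on $\Sigma$, $h_1$, $h_2$ and $U$, Theorem~\ref{thm:mild} furnishes genuine classical solutions of both~\eqref{eq:originalSystems} and~\eqref{eq:obSGAII}, so the regularization detour of Theorem~\ref{thmObse} is no longer needed: the Lyapunov function~\eqref{eq:LyapReg} may be evaluated directly along $(\tilde{X},\tilde{z})$, yielding verbatim the bound~\eqref{eqLJJJD}. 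This already delivers $(\tilde{X},\tilde{z}) \in L^\infty(\mathbb{R}_{\geq 0};\altmathcal{X})$, and exponential decay of $\norm{(\tilde{X}(t),\tilde{z}(\cdot,t))}_{\altmathcal{X}}$ when $\varphi$ is PE, since then $\norm{\tilde{\Theta}(t)} \to 0$ exponentially and hence so does $\norm{f(\cdot,t)}_{\altmathcal{X}}$.

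The next step is to differentiate the transport equation~\eqref{eq:originalSystemsPDEObsErrL} in $\xi$, producing the hyperbolic system~\eqref{eq:dyn2} for $\tilde{\zeta}$ with forcing $f_3 = \pd{f_2}{\xi} = \tilde{\Theta}(t)\Sigma\bigl(v(X(t),U(t))\bigr)\pd{z(\xi,t)}{\xi}$. The homogeneous boundary condition $\tilde{\zeta}(0,t)=0$ is derived rather than assumed: since $\tilde{z}(0,t)\equiv 0$ one has $\pd{\tilde{z}(0,t)}{t}=0$, and since moreover $z(0,t)=0$ in~\eqref{eq:originalSystemsBC} one has $f_2(0,t)=0$; evaluating~\eqref{eq:originalSystemsPDEObsErrL} at $\xi=0$ then forces $\Lambda\tilde{\zeta}(0,t)=0$. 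Assumption~\ref{ass:fwd} guarantees, for classical solutions, that $z\in L^\infty(\mathbb{R}_{\geq 0};\altmathcal{Y})$, so $\pd{z}{\xi}\in L^\infty(\mathbb{R}_{\geq 0};L^2((0,1);\mathbb{R}^{n_z}))$; together with boundedness of $\Sigma\bigl(v(X,U)\bigr)$ (from forward completeness of $X$, boundedness of $U$, and continuity of $\Sigma$) and of $\tilde{\Theta}$, this yields $f_3\in L^\infty(\mathbb{R}_{\geq 0};L^2((0,1);\mathbb{R}^{n_z}))$ with $\norm{f_3(\cdot,t)}_{L^2((0,1);\mathbb{R}^{n_z})}\lesssim\norm{\tilde{\Theta}(t)}$.

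I would then re-run the Lyapunov argument of Theorem~\ref{thmObse} on the PDE block alone, with $W(t)=\frac{\gamma^\prime}{2}\int_0^1\eu^{-\alpha^\prime\xi}\tilde{\zeta}^{\mathrm{T}}(\xi,t)\tilde{\zeta}(\xi,t)\dif\xi$: integration by parts in $\xi$ exploits $\tilde{\zeta}(0,t)=0$ to produce the favorable boundary term at $\xi=1$ and the coercive interior term $-\frac{\gamma^\prime\alpha^\prime}{2}\int_0^1\eu^{-\alpha^\prime\xi}\tilde{\zeta}^{\mathrm{T}}\Lambda\tilde{\zeta}\dif\xi$; choosing $\alpha^\prime>\frac{2\eta_4}{\lambda\ped{min}(\Lambda)}$ absorbs the $F_2\tilde{\zeta}$ term, and a Young inequality handles the $f_3$ cross term, giving $\dot{W}(t)\leq-\bar{\omega}^\prime W(t)+\eta^\prime\norm{f_3(\cdot,t)}_{L^2((0,1);\mathbb{R}^{n_z})}^2$. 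Grönwall--Bellman then yields~\eqref{eqLJJJD2}. Since $\altmathcal{Y}$-norm obeys $\norm{(\tilde{X},\tilde{z})}_{\altmathcal{Y}}^2=\norm{(\tilde{X},\tilde{z})}_{\altmathcal{X}}^2+\norm{\tilde{\zeta}(\cdot)}_{L^2((0,1);\mathbb{R}^{n_z})}^2$, boundedness of $\norm{\tilde{\zeta}(\cdot,t)}_{L^2}$ (the convolution of the exponential kernel with a bounded $\norm{f_3}_{L^2}^2$ is bounded) gives $(\tilde{X},\tilde{z})\in L^\infty(\mathbb{R}_{\geq 0};\altmathcal{Y})$, while under PE, $\norm{f_3(\cdot,t)}_{L^2}^2\to 0$ exponentially, forcing $\norm{\tilde{\zeta}(\cdot,t)}_{L^2}\to 0$ exponentially and closing the argument.

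The main obstacle is that~\eqref{eq:dyn2} only has a bona fide classical solution when the data are smooth enough that $\tilde{z}$ gains $H^2$-regularity in $\xi$, whereas the hypotheses only secure $\tilde{z}\in C^0(\mathbb{R}_{\geq 0};\mathscr{D}(\mathscr{A}))$; the Lyapunov computation above is therefore first carried out for a smoother approximating family (regularized ICs, and if needed $\Sigma$, $h_i$, $U$), with constants uniform in the regularization, and~\eqref{eqLJJJD2} is then transferred to the mild level by lower semicontinuity of the norm and dominated convergence, exactly in the spirit of Lemma~\ref{lemma:convergence}. That limiting passage, rather than the energy estimate itself, is the delicate part of the proof.
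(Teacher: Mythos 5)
Your proposal is correct and follows essentially the same route as the paper's proof: differentiate the error transport equation in $\xi$, run the weighted-$L^2$ Lyapunov estimate on $\tilde{\zeta}=\pd{\tilde{z}}{\xi}$ with forcing $f_3=\tilde{\Theta}(t)\Sigma\bigl(v(X,U)\bigr)\pd{z}{\xi}$, and transfer the resulting bound to the non-smooth level by the approximation argument of Lemma~\ref{lemma:convergence}. The details you add beyond the paper's sketch --- the derivation of $\tilde{\zeta}(0,t)=0$ from $z(0,t)=\tilde{z}(0,t)=0$ and the explicit identification of the regularity obstacle requiring the limiting passage --- are correct and consistent with what the paper leaves implicit.
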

Theorem~\ref{thm:Thm33} concludes the technical part of the paper. Before moving to Section~\ref{sect:example}, it is worth remarking that Theorems~\ref{thmObse} and~\ref{thm:Thm33} require PE conditions for the signal $\varphi(t)$ to ensure the convergence of the observer error dynamics in the corresponding norms. In practice, PE conditions need to be checked numerically; preliminary insights on how to design the control input $U(t)$ to ensure PE of $\varphi(t)$ may also be gained by employing reduced-order representations of the ODE-PDE system~\eqref{eq:originalSystems}, see, e.g., \textcite{Spert}.

\section{An example from vehicle dynamics}\label{sect:example}
The proposed adaptive observer is validated considering a relevant example borrowed from vehicle dynamics. 

\subsection{Lateral vehicle dynamics}\label{sect:exampleEqs} 
The performance of the adaptive observer designed in Section~\ref{sect:observer} is tested on a semilinear single-track model governing the lateral dynamics of a road vehicle driving at a constant cruising speed. A schematic of the model is depicted in Figure~\ref{figureForcePostdoc}.
\begin{figure}
\centering
\includegraphics[width=0.7\linewidth]{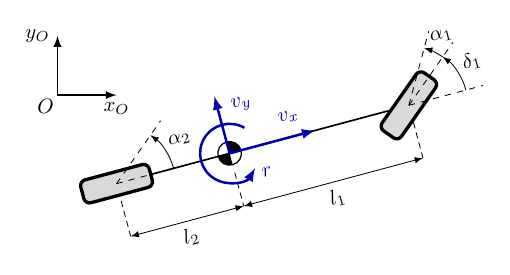} 
\caption{Single-track vehicle model.}
\label{figureForcePostdoc}
\end{figure}
In particular, for sufficiently small steering inputs, the linear ODE describing the rigid vehicle dynamics may be derived as \parencite{Guiggiani}
\begin{subequations}\label{eq:rigid}
\begin{align}
\dot{v}_y(t) & = -\dfrac{1}{m}\bigl(F_{y1}(t) + F_{y2}(t)\bigr) -v_xr(t), \\
\dot{r}(t) & = -\dfrac{1}{I_z}\bigl( l_1F_{y1}(t)-l_2F_{y2}(t)\bigr), && t\in (0,T),
\end{align}
\end{subequations}
where the lumped states $v_y(t)$, $r(t) \in \mathbb{R}$ are the vehicle's lateral velocity and yaw rate, $v_x\in \mathbb{R}_{>0}$ is its constant longitudinal speed, $m\in \mathbb{R}_{>0}$ and $I_z\in \mathbb{R}_{>0}$ denote respectively the vehicle mass and moment of inertia of the center of gravity around the vertical axis, and $l_1$, $l_2 \in \mathbb{R}_{>0}$ are the front and rear axle lengths. Finally, adopting a distributed Dahl model for dry friction \parencite{DistrLuGre}, the tire forces $F_{y1}(t)$, $F_{y2}(t) \in \mathbb{R}$ may be calculated as
\begin{align}\label{eq:Fi}
\begin{split}
F_{yi}(t) & = L_i\int_0^1 p_i(\xi)\sigma_{i}z_i(\xi,t), \quad i \in \{1,2\},
\end{split} 
\end{align}
where the distributed state $z_i(\xi,t) \in \mathbb{R}$, $i = \{1,2\}$, represents the deflection of a bristle schematizing a tire rubber particle inside the contact patch, $L_i \in \mathbb{R}_{>0}$ denotes the contact patch length, $p_i \in C^1([0,1];\mathbb{R}_{\geq 0})$ is the vertical pressure distribution, and $\sigma_{i} \in \mathbb{R}_{>0}$ is the normalized micro-stiffness coefficient \parencite{DistrLuGre}. The bristle dynamics is governed by the following PDE:
\begin{subequations}\label{Eq:PDEz}
\begin{align}
\begin{split}
& \dpd{z_i(\xi,t)}{t} + \dfrac{v_x}{L_i}\dpd{z_i(\xi,t)}{\xi} =-\dfrac{\theta_i\sigma_{i}\abs{v_i\bigl(v_y(t),r(t), \delta_i(t)\bigr)}}{\mu_i\Bigl(v_i\bigl(v_y(t),r(t), \delta_i(t)\bigr)\Bigr)}z_i(\xi,t) \\
& \qquad \qquad \qquad \qquad \qquad \qquad + 2 v_i\bigl(v_y(t),r(t), \delta_i(t)\bigr), \\
& \qquad \qquad \qquad \qquad \qquad \qquad (\xi,t) \in (0,1)\times (0,T),
\end{split} \\
& z_i(0,t) = 0, \quad t \in (0,T) \label{eq:BCz}
\end{align}
\end{subequations}
where $\theta_i \in \mathbb{R}_{>0}$ is the uncertain friction parameter, $\mu_i \in C^1(\mathbb{R};[\mu\ped{min},\infty))$, $i \in \{1,2\}$, with $\mu\ped{min}\in \mathbb{R}_{>0}$, is an expression for the nominal friction coefficient, and
\begin{subequations}\label{eq:slipAngles}
\begin{align}
\begin{split}
v_1\bigl(v_y(t),r(t), \delta_1(t)\bigr)  &= v_y(t) + l_1r(t)-v_x\delta_1(t), 
\end{split} \\
\begin{split}
v_2\bigl(v_y(t),r(t), \delta_2(t)\bigr) & = v_y(t)-l_2r(t)-v_x\delta_2(t),
\end{split}
\end{align}
\end{subequations}
being $\delta_1(t), \delta_2(t) \in \mathbb{R}$ the steering inputs at the front and rear axles, respectively.
Hence, defining $\mathbb{R}^2 \ni X(t) \triangleq [v_y(t)\; r(t)]^{\mathrm{T}}$, $\mathbb{R}^2\ni z(\xi,t) \triangleq [z_1(\xi,t) \; z_2(\xi,t)]^{\mathrm{T}}$, $\mathbb{R}^2 \ni U(t) \triangleq [\delta_1(t)\; \delta_2(t)]^{\mathrm{T}}$, and $\mathbb{R}^2 \ni v(X(t),U(t)) = [v_1(X(t),U(t))\; v_2(X(t),U(t))]^{\mathrm{T}} \triangleq [v_1(v_y(t),r(t), \delta_1(t)) \; v_2(v_y(t),r(t), \delta_2(t))]^{\mathrm{T}}$, the resulting system may be recast in the form~\eqref{eq:originalSystems}, with
\begin{align}\label{eq:MatricesOOO}
A_1 & \triangleq \begin{bmatrix} 0 & -v_x \\ 0 &  0 \end{bmatrix}, \quad A_2 \triangleq \begin{bmatrix}1 & l_1 \\ 1 & -l_2 \end{bmatrix},   \nonumber\\
G_1 & \triangleq -\begin{bmatrix}\dfrac{1}{m} & \dfrac{1}{m} \\ \dfrac{l_1}{I_z} & -\dfrac{l_2}{I_z} \end{bmatrix},\quad G_2  \triangleq -v_xI_2,  \nonumber\\
K_1(\xi) & \triangleq \begin{bmatrix} L_1\sigma_{1}p_1(\xi) & 0 \\ 0 & L_2\sigma_{2}p_2(\xi)\end{bmatrix}, \quad \Lambda \triangleq \begin{bmatrix} \dfrac{v_x}{L_1} & 0 \\ 0 & \dfrac{v_x}{L_2}\end{bmatrix}, \nonumber \\
 \Sigma(v) & \triangleq \begin{bmatrix} -\dfrac{\sigma_{1}\abs{v_1}}{\mu_1(v_1)} & 0 \\0 & -\dfrac{\sigma_{2}\abs{v_2}}{\mu_2(v_2)} \end{bmatrix}, \quad h_2(v)  \triangleq 2v, 
\end{align}
$\mathbf{Sym}_2(\mathbb{R}) \ni \Theta = \diag\{\theta_1,\theta_2\}$, $K_2=K_3=0$, and $h_1(v) = 0$. Finally, the vertical pressure distribution inside the tires' contact patches, appearing in~\eqref{eq:Fi} and~\eqref{eq:MatricesOOO}, may be modeled using exponentially decreasing functions of the type \parencite{DistrLuGre}
\begin{align}\label{eq:PressureDistr}
p_i(\xi) = p_{0,i}\exp(-a_i\xi), \quad \xi \in [0,1],
\end{align}
with $p_{0,i}$, $a_i \in \mathbb{R}_{>0}$, $i \in \{1,2\}$.

The single-track model formulated according to~\eqref{eq:originalSystems},~\eqref{eq:MatricesOOO}, and~\eqref{eq:PressureDistr} may be used to estimate the friction coefficients at the front and rear axles, $\theta_1$ and $\theta_2$, respectively, jointly with the vehicle states. Reliable information about tire-road friction is crucial not only for enhancing the internal performance of ADAS, but also to improve fleet and traffic flow coordination via vehicle-to-vehicle communication, and to support predictive operations such as road maintenance and salting \parencite{Salt,Albin}.
Starting with the measurements $Y_1(t) = \od{z(0,t)}{t}$ and $Y_2(t) = \od[2]{z(0,t)}{t}$, which may be acquired using smart tire sensors mounted on both the front and rear axles, it is straightforward to verify that Assumption~\ref{ass:rightInv} holds for any combination of model parameters. In this context, it is perhaps worth mentioning that the yaw rate $r(t)$ can be acquired with standard instrumentation equipped on any passenger vehicle, thus rendering one of the two boundary measurements redundant in most cases, especially if $\theta_1 = \theta_2$ is assumed.

\begin{remark}[Verification of PE via time-scale separation]\label{rem:pe-sp}
For the considered single-track model, the PE assumption on the filtered regressor
$\varphi(t)\in\mathbb{R}^{n_z}$ in Theorem~\ref{thm:3.1} can be checked
by exploiting the time-scale separation of the coupled ODE-PDE dynamics.
Indeed, introducing the time-scale parameter $\mathbb{R}_{>0} \ni \epsilon \triangleq \frac{\bar{L}}{v_x}$, where $ \bar{L}\triangleq \frac{L_1+L_2}{2}$ or $\bar{L}\triangleq \frac{L_1L_2}{L_1+L_2}$ is a characteristic length of the problem, the distributed tire dynamics~\eqref{eq:originalSystemsPDE} can be written as $
\epsilon \pd{z(\xi,t)}{t} + \Lambda\pd{z(\xi,t)}{\xi}
= \Theta\Sigma(v(X(t),U(t)))z(\xi,t) + h_2(v(X(t),U(t))$, revealing a fast PDE subsystem coupled to a slow ODE subsystem. For sufficiently
small $\epsilon$ (i.e.,\ $v_x\gg \bar L$), the fast dynamics converge
exponentially to their quasi-steady profile~\parencite[Lemma 2.1]{Spert}, yielding the reduced-order model $
\dot{\bar X}(t)=A_1\bar X(t)+G_1\Phi(v(\bar X(t),\bar U(t)))$ \parencite[Section 3]{Spert}.
Under classic controllability conditions for the reduced-order model, standard manoeuvres (e.g., lane changes,
sine-sweep steering) can be designed to ensure that the rigid relative velocity $v(X(t),U(t))$ is persistently
excited, so that the filtered regressor satisfies the PE condition $
\int_t^{t+T}\varphi(t^\prime)\varphi^{\mathrm{T}}(t^\prime)\dif t^\prime \geq \mu I_{n_z}
$ for some $T,\mu \in \mathbb{R}_{>0}$ and all $t \in \mathbb{R}_{\geq 0}$.
By Theorem~3.1 in~\textcite{Spert}, for sufficiently small $\epsilon$
the full ODE-PDE system remains $\altmathcal{O}(\epsilon)$-close to the reduced
dynamics, thereby preserving controllability and the PE property.
\end{remark}

\subsection{Simulation results}\label{sect:sim}

The numerical values for the model parameters of the example discussed below are $v_x = 20$ $\textnormal{m}\,\textnormal{s}^{-1}$, $m = 1300$ kg, $I_z = 2000$ $\textnormal{kg}\,\textnormal{m}^{2}$, $l_1 = 1$, $l_2 = 1.6$ m, $L_1 = 0.11$, $L_2 = 0.09$ m, $\sigma_1 = 165$, $\sigma_2 = 415$ $\textnormal{m}^{-1}$, $\mu_1(\cdot) = \mu_2(\cdot) = 1$, $p_{0,1} = 3.75\cdot 10^4$, $p_{0,2} = 2.86\cdot 10^4$ $\textnormal{N}\,\textnormal{m}^{-1}$, and $a_1 = a_2 = 0.1$.
With the given combination of parameters, the single-track model is understeer \parencite{Guiggiani}, and thus intrinsically stable for sufficiently low longitudinal speeds $v_x$, and small steering inputs $\delta_1(t)$ and $\delta_2(t)$. The following results refer to numerical simulations conducted in MATLAB/Simulink\textsuperscript{\textregistered} environment. The semilinear PDE subsystem was discretized in space using finite differences with a discretization step of $0.02$, whereas the fixed time step was specified as $10^{-6}$ s. The ICs for the actual system were set to $X_0 = [3\; -0.4]^{\mathrm{T}}$, and $z_0(\xi) = [0.3\; 0.3]^{\mathrm{T}}$ (corresponding to $\norm{z_0(\cdot)}_{L^2((0,1);\mathbb{R}^2)} = 0.42$), whereas those for the observer to $\hat{X}_0 = [0\; 0]^{\mathrm{T}}$, $\hat{z}_0(\xi) = [0\; 0]^{\mathrm{T}}$, and $\mathbb{R}^2 \ni \hat{\theta}_0 \triangleq \hat{\theta}(0) = [0\; 0]^{\mathrm{T}}$. The observer gain $L_1 \in \mathbf{M}_2(\mathbb{R})$ in~\eqref{eq:observer} was specified as $L_1 = -(A_1 + qI_2)A_2^{-1}$, with $q = 50$; the gains in~\eqref{eq:filters},~\eqref{eq:dottheta}, and~\eqref{eq:RQ} were instead chosen as $\varrho = 20$ and $\psi = 50$, and $\Gamma = 5000I_2$. 
Finally, a sinusoidal steering input, which is common in vehicle dynamics \parencite{Shao,Shao2}, was designed with $\delta_1(t) = 0.05\sin(2t) + 0.01\sin(4t)$ and $\delta_2(t) = 0$, and persistence of excitation was tested numerically.

Figure~\ref{figure:parameters} illustrates the convergence of the parameter estimate $\hat{\theta}(t)$ to the true value, corresponding to $\theta = [1.2\; 0.8]^{\mathrm{T}}$ for $t< 5$ s, and to $\theta = [0.6\; 0.4]^{\mathrm{T}}$ for $t\geq 5$ s. The abrupt change at $t = 5$ s simulates a sudden discontinuity in friction, as might occur during a typical \text{\textmu}-split maneuver. From Figure~\ref{figure:parameters}, it is evident that a good estimate of $\theta$ is already achieved before $t = 2$ s during the first phase; during the second phase, the convergence is reached around $t = 8$ s.
\begin{figure}
\centering
\includegraphics[width=0.8\linewidth]{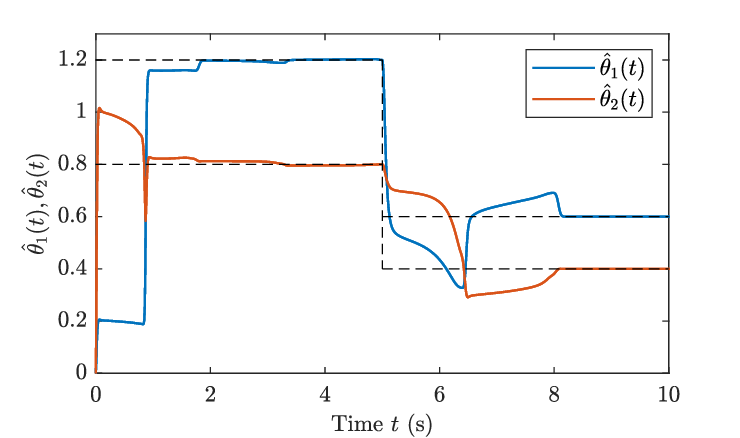} 
\caption{Convergence of the parameter estimate $\hat{\theta}(t)$ to the true value $\theta$.}
\label{figure:parameters}
\end{figure}
Due to the presence of the absolute value $\abs{\cdot}$ in~\eqref{Eq:PDEz} and in the matrix $\Sigma(v)$ in~\eqref{eq:MatricesOOO}, the source term is intrinsically non-smooth. Hence, the convergence of the observer error states should be investigated in the norm $\norm{(\tilde{X}(t),\tilde{z}(\cdot,t))}_{\altmathcal{X}}$.
In particular, the theoretical predictions of Theorem~\ref{thmObse} were also corroborated in simulation, as demonstrated graphically in Figure~\ref{figure:error}, where the norms $\norm{(X(t),z(\cdot,t))}_{\altmathcal{X}}$, $\norm{(\hat{X}(t),\hat{z}(\cdot,t))}_{\altmathcal{X}}$, and $\norm{(\tilde{X}(t),\tilde{z}(\cdot,t))}_{\altmathcal{X}}$ are plotted together ($t \leq 2$ s). A satisfactory estimate of the lumped and distributed states is achieved around $t = 0.25$ s, as indicated by the rapid initial decrease of $\norm{(\tilde{X}(t),\tilde{z}(\cdot,t))}_{\altmathcal{X}}$ to a value near zero. Subsequently, the error remains extremely small, and only exhibits a minor transient after the abrupt change in friction occurring at $t = 5$ s. 
\begin{figure}
\centering
\includegraphics[width=0.8\linewidth]{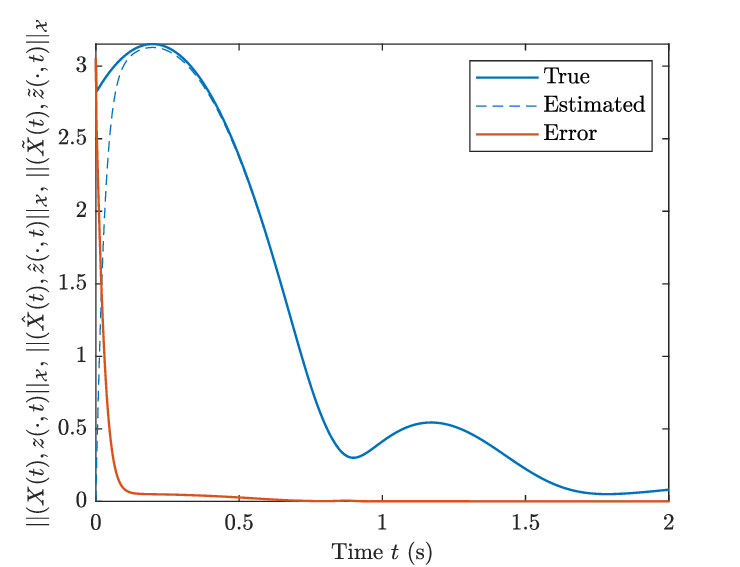} 
\caption{True states (solid tick blue line), observer estimates (dotted blue line), and observer error (solid orange line).}
\label{figure:error}
\end{figure}
Finally, it should be emphasized that the proposed observer for the PDE variables relies on exact cancellation via transport dynamics. To evaluate the robustness of the approach, additional simulations were performed by replacing the true transport velocity in~\eqref{eq:SystemsPDEObs} with an estimate obtained using a deliberately mismatched longitudinal speed, $\hat{v}_x = 0.8v_x$. The results, omitted here due to space restrictions, confirmed that the observer maintains satisfactory performance in the presence of significant parameter uncertainty.

\section{Conclusions}\label{sect:conclusion}
This paper focused on the synthesis of an adaptive observer for a class of semilinear ODE-PDE systems representing mechanical systems with distributed friction and rolling contact dynamics. Specifically, a boundary-sensing-based observer was designed that achieves joint estimation of both the lumped and distributed states, as well as uncertain friction parameters. Under persistence of excitation conditions, theoretical guarantees of convergence were provided for both mild and classical solutions, and validated through simulations on a vehicle dynamics benchmark. Future work will focus on experimental validation and extension to bidirectional PDE systems and multi-contact configurations.

\subsubsection*{Acknowledgments}
This research was financially supported by the project FASTEST (Reg. no. 2023-06511), funded by the Swedish Research Council.

{\setstretch{1} 
\printbibliography
}

\appendix


\section{Proof of Lemma~\ref{lemma:convergence}}\label{app:cond}
The proof of Lemma~\ref{lemma:convergence} is given below.

\begin{proof}[Proof of Lemma~\ref{lemma:convergence}]
Consider the transformations $\mathbb{R}^{n_X} \ni W(t) \triangleq \tilde{X}(t)\eu^{-\rho t}$, $\mathbb{R}^{n_X} \ni W^n(t) \triangleq \tilde{X}^n(t)\eu^{-\rho t}$, $\mathbb{R}^{n_z} \ni w(\xi,t) \triangleq \tilde{z}(\xi,t)\eu^{-\rho t}$, and $\mathbb{R}^{n_z} \ni w^n(\xi,t) \triangleq \tilde{z}^n(\xi,t)\eu^{-\rho t}$, with $\rho \in \mathbb{R}_{>0}$ to be determined. Substitution into~\eqref{eq:obSGAII} and~\eqref{eq:obSGAIIn} yields respectively the two ODE-PDE systems:
\begin{subequations}\label{eq:obSGAIInrho1}
\begin{align}
\begin{split}
& \dot{W}(t) = -\rho W(t)+ \bar{A}_1W(t) + G_1(\mathscr{K}_1w)(t)  \\
& \qquad\quad+ F_1(t)(\mathscr{K}_2w)(t)  + f_1(t),\quad  t \in (0,T),
\end{split} \label{eq:SystemsODEObsErrLnrho1}\\
\begin{split}
& \dpd{w(\xi,t)}{t} + \Lambda \dpd{w(\xi,t)}{\xi} = -\rho w(\xi,t) + F_2(t)w(\xi,t)\\
& \qquad \qquad \qquad \qquad \qquad \qquad +f_2(\xi,t), \\
&\qquad \qquad \qquad \qquad \qquad \qquad  (\xi,t) \in (0,1) \times (0,T),
\end{split} \label{eq:originalSystemsPDEObsErrLnrho1} \\
& w(0,t) = 0, \quad t \in (0,T),\label{eq:originalSystemsBCObsErrLnrho1}
\end{align}
\end{subequations}
and
\begin{subequations}\label{eq:obSGAIInrho}
\begin{align}
\begin{split}
& \dot{W}^n(t) = -\rho W^n(t)+ \bar{A}_1W^n(t) + G_1(\mathscr{K}_1w^n)(t)  \\
& \qquad\quad+ F_1^n(t)(\mathscr{K}_2w^n)(t)  + f_1^n(t),\quad  t \in (0,T),
\end{split} \label{eq:SystemsODEObsErrLnrho}\\
\begin{split}
& \dpd{w^n(\xi,t)}{t} + \Lambda \dpd{w^n(\xi,t)}{\xi} = -\rho w^n(\xi,t) + F_2^n(t)w^n(\xi,t)\\
& \qquad \qquad \qquad \qquad \qquad \qquad +f_2^n(\xi,t), \\
&\qquad \qquad \qquad \qquad \qquad \qquad  (\xi,t) \in (0,1) \times (0,T),
\end{split} \label{eq:originalSystemsPDEObsErrLnrho} \\
& w^n(0,t) = 0, \quad t \in (0,T),\label{eq:originalSystemsBCObsErrLnrho}
\end{align}
\end{subequations}
It is straightforward to verify that, for $\rho \in \mathbb{R}_{>0}$ large enough, the unbounded operator $(\mathscr{A}_\rho,\mathscr{D}(\mathscr{A}_\rho))$, defined by
\begin{subequations}
\begin{align}
\bigl(\mathscr{A}_\rho (Y,v)\bigr) & \triangleq \begin{bmatrix} -\rho Y + \bar{A}_1Y + G_1(\mathscr{K}_1v)\\-\Lambda\dpd{v(\xi)}{\xi}-\rho v(\xi)\end{bmatrix}, \\
\mathscr{D}(\mathscr{A}_\rho) & \triangleq \bigl\{(Y,v) \in \altmathcal{Y}\mathrel{\big|} v(0) = 0\bigr\},
\end{align}
\end{subequations}
satisfies $\mathscr{A}_\rho \in \mathscr{G}(\altmathcal{X};1,-\omega_\rho)$, where $\omega_\rho \in \mathbb{R}_{>0}$ may be made arbitrarily large by increasing $\rho$. Denoting by $T_{\mathscr{A}_\rho}(t)$, $t \in [0,T]$, the $C_0$-semigroup generated by $\mathscr{A}_\rho$, in the abstract setting, for all ICs $(W_0,w_0) \triangleq (W(0),w(\cdot,0)) \in \altmathcal{X}$ and $(W_0^n,w_0^n) \triangleq (W^n(0),w^n(\cdot,0)) \in \altmathcal{X}$, the mild solutions of~\eqref{eq:obSGAIInrho1} and~\eqref{eq:obSGAIInrho} read respectively (\cite{Pazy}, Chapter 6)
\begin{subequations}
\begin{align}
\begin{split}
\begin{bmatrix} W(t) \\ w(t)\end{bmatrix}&= T_{\mathscr{A}_\rho}(t)\begin{bmatrix} \tilde{X}_0(t) \\ \tilde{z}_0(t)\end{bmatrix} + \int_0^t T_{\mathscr{A}_\rho}(t-t^\prime)\\
& \quad \times \Biggl( F(t)\begin{bmatrix} W(t^\prime) \\ w(t^\prime)\end{bmatrix} + f(t^\prime) \Biggr)\dif t^\prime, 
\end{split} \label{eq:solMild1} \\
\begin{split}
\begin{bmatrix} W^n(t) \\ w^n(t)\end{bmatrix}&= T_{\mathscr{A}_\rho}(t)\begin{bmatrix} \tilde{X}_0^n(t) \\ \tilde{z}_0^n(t)\end{bmatrix} + \int_0^t T_{\mathscr{A}_\rho}(t-t^\prime)\\
& \quad \times \Biggl( F^n(t)\begin{bmatrix} W^n(t^\prime) \\ w^n(t^\prime)\end{bmatrix} + f^n(t^\prime) \Biggr)\dif t^\prime, \quad t \in [0,T],
\end{split}\label{eq:solMild2}
\end{align}
\end{subequations}
where $ f, f^n : \altmathcal{X} \times [0,T]\mapsto \altmathcal{X}$, with $\mathbb{R}^{n_X+n_z} \ni f(\xi,t) \triangleq [f_1^{\mathrm{T}}(t)\; f_3^{\mathrm{T}}(\xi,t)]^{\mathrm{T}}$ and $\mathbb{R}^{n_X+n_z} \ni f^n(\xi,t) \triangleq [f_1^{n\mathrm{T}}(t)\; f_3^{n\mathrm{T}}(\xi,t)]^{\mathrm{T}}$, and
\begin{align}
F(t) & \triangleq \begin{bmatrix} F_1(t) & 0\\ 0 & F_2(t) \end{bmatrix}, \quad F^n(t) \triangleq \begin{bmatrix} F_1^n(t) & 0\\ 0 & F_2^n(t) \end{bmatrix}.
\end{align}
Defining $\mathbb{R}^{n_X}\ni \tilde{W}^n(t) \triangleq W(t)-W^n(t)$ and $\mathbb{R}^{n_z}\ni \tilde{w}^n(\xi,t) \triangleq w(\xi,t)-w^n(\xi,t)$, subtracting~\eqref{eq:solMild2} from~\eqref{eq:solMild1}, and using the fact that $\mathscr{A}_\rho \in \mathscr{G}(\altmathcal{X};1,-\omega_\rho)$ yields
\begin{align}\label{eq:ineqaaaa}
\begin{split}
& \norm{(\tilde{W}^n(t),\tilde{w}^n(\cdot,t))}_\altmathcal{X} \leq \eu^{-\omega_\rho t}\norm{(\tilde{W}_0^n,\tilde{w}_0^n(\cdot))}_\altmathcal{X} \\
& \quad + \dfrac{\norm{F(\cdot)}_\infty}{\omega_\rho}\norm{(\tilde{W}^n(\cdot),\tilde{w}^n(\cdot,\cdot))}_\infty \\
& \quad + \dfrac{\norm{(W^n(\cdot),w^n(\cdot,\cdot))}_\infty}{\omega_\rho}\norm{F(\cdot)-F^n(\cdot)}_\infty \\
& \quad + \dfrac{1}{\omega_\rho}\norm{f(\cdot,\cdot)-f^n(\cdot,\cdot)}_\infty, \quad t \in [0,T],
\end{split}
\end{align}
where $\altmathcal{X} \ni (\tilde{W}_0^n, \tilde{w}_0^n(\xi)) \triangleq (\tilde{W}^n(0), \tilde{w}^n(\xi,0))$, and
\begin{subequations}
\begin{align}
\norm{(W^n(\cdot),w^n(\cdot,\cdot))}_\infty \triangleq \sup_{t\in \mathbb{R}_{\geq 0}} \norm{(W^n(t),w^n(\cdot,t))}_\altmathcal{X}, \\
\norm{(\tilde{W}^n(\cdot),\tilde{w}^n(\cdot,\cdot))}_\infty \triangleq \sup_{t\in \mathbb{R}_{\geq 0}} \norm{(\tilde{W}^n(t),\tilde{w}^n(\cdot,t))}_\altmathcal{X}.
\end{align}
\end{subequations}
Theorem~\ref{thmObse} ensures that $\norm{(W^n(\cdot),w^n(\cdot,\cdot))}_\infty$ is bounded for $t \in \mathbb{R}_{\geq 0}$, and $\norm{F(\cdot)}_\infty$ exists and is bounded by assumption. Since $\omega_\rho \in \mathbb{R}_{>0}$ may be selected arbitrarily to satisfy $\omega_\rho > \norm{F(\cdot)}_\infty$, the above inequality~\eqref{eq:ineqaaaa} implies the existence of a constant $\beta_\rho \in \mathbb{R}_{>0}$ such that
\begin{align}\label{eq:lslsl}
\begin{split}
& \norm{(\tilde{W}^n(\cdot),\tilde{w}^n(\cdot,\cdot))}_\infty \leq \beta_\rho\biggl( \norm{(\tilde{W}_0^n,\tilde{w}_0^n(\cdot))}_\altmathcal{X} \\
& + \norm{F(\cdot)-F^n(\cdot)}_\infty+\norm{f(\cdot,\cdot)-f^n(\cdot,\cdot)}_\infty\biggr), \quad t \in [0,T].
\end{split}
\end{align}
Letting $n \to \infty$ in~\eqref{eq:lslsl} and recalling~\eqref{eq:Xnnnnnnnn} provides~\eqref{eq:supCobv}.
\end{proof}

\end{document}